\documentclass[submission,copyright,creativecommons]{eptcs}
\providecommand{\event}{AiML 2026} 

\usepackage{aiml26,iftex}

\ifpdf
  \usepackage{underscore}         
  \usepackage[T1]{fontenc}        
\else
  \usepackage{breakurl}           
\fi

\usepackage{amsmath, amssymb, wrapfig, caption, mdframed}
\usepackage{multicol}
\usepackage[numbers]{natbib}

\usepackage[shortlabels]{enumitem}

    \input xy
	\xyoption{all}

\newcommand{\lang}[1]{\mathbf{#1}}
\newcommand{\logic}[1]{\mathsf{#1}}

\newcommand{\positions}[1]{\mathsf{#1}} 
\newcommand{\At}{\mathsf{At}}

\newcommand{\R}{\mathbin{\mathsf{R}}}

\theoremstyle{definition}

\title{Hyperformalism for Relevant Modal Logics}
\author{Thomas Macaulay Ferguson
\institute{Rensselaer Polytechnic Institute \\
Department of Cognitive Science\\
Troy, New York, USA}
\email{tferguson@gradcenter.cuny.edu}
\and
Shay Allen Logan
\institute{Kansas State University\\
Department of Philosophy\\
Manhattan, Kansas, USA}
\email{salogan@ksu.edu}
}

\newcommand{\titlerunning}{Hyperformalism for Relevant Modal Logics}
\newcommand{\authorrunning}{T. M. Ferguson \& S. A. Logan}

\hypersetup{
  bookmarksnumbered,
  pdftitle    = {\titlerunning},
  pdfauthor   = {\authorrunning},
  pdfsubject  = {Hyperformalism for Relevant Modal Logics},               
}

\begin{document}
\maketitle

\begin{abstract}
The property of \emph{hyperformalism} has proven to be a powerful tool in the analysis of relevant logics, revealing that increasingly weak relevant logics are closed under increasingly strong classes of non-uniform substitutions. In such substitutions, two instances of the same atom may be treated independently in virtue of syntactic features of their appearances in a complex. In this work, we extend the scope of hyperformalism to relevant modal logics by considering \textsf{MPos}-hyperformalism, that is, a property in which relevant modal logics are closed under substitutions in which nesting within the scope of modal operators is taken into account. We prove that the weak relevant modal logic $\logic{B}^\Box$ is \textsf{MPos}-hyperformal and investigate the classes of non-uniform substitutions under which several extensions are closed. We then consider corresponding refinements of the variable sharing property that hold of such logics. We conclude by introducing a modal logic $\logic{K}^{\positions{MPos}}$ that constitutes the largest \textsf{MPos}-hyperformal sublogic of the classical modal logic $\logic{K}$ and provide soundness and completeness results.
\end{abstract}

\section{Introduction}

In typical formal logics, atomic formulas are treated as being entirely independent. This is (at least partly, and maybe wholly) captured by their being closed under uniform substitutions, since closure under uniform substitutions means that one can safely replace occurrences of distinct atoms with distinct formulas without needing to worry about in any way coordinating these substitutions. 

\textit{Hyperformal} logics take the notion of independence captured by uniform substitutions and run with it. More concretely (but still abstractly) a logic is hyperformal when occurrences of the same atom in sufficiently different syntactic contexts can be treated as independent. As with uniform substitutions and independence for atoms, the independence enforced by hyperformalism is made concrete in a logic $\logic{L}$ by the requirement that $\logic{L}$-theoremhood be preserved under non-uniform substitutions treating such instances independently. \emph{E.g.}, \cite{logan2022depth} shows that the relevant logic $\logic{DR}$ is closed under substitutions that treat atoms occurring in the scope of different numbers of conditionals---which is to say, that occur at different \emph{depths}---as independent. For example: in the formula $p\rightarrow(q\rightarrow p)$, the two instances of the atom $p$ appear at different depths and are thus are treated as independent, \emph{i.e.}, as though the they were distinct atoms. Consequently, the formula $p\rightarrow(q\rightarrow r)$ follows from a depth substitution. That the latter formula lacks any variable common to its antecedent and consequent ensures that it is not a theorem of $\logic{DR}$, whence $p\rightarrow(q\rightarrow p)$ is not a theorem. Increasingly fine-grained species of hyperformalism have since been introduced (see \emph{e.g.} \cite{ferguson2023topic,standefer2024topics})---along with corresponding fine-grained variable sharing properties---proving hyperformalism to be a powerful tool in the analysis of relevant logics.\footnote{For more on relevant logics see e.g. \cite{logan2024relevance}, \cite{Mares2004}, \cite{standefer2024routes}, or \cite{read1988relevant}.}

Relevant \emph{modal} logics have nearly as long a history as relevant logics themselves.\footnote{For important early papers, see e.g. \cite{routleymeyer1972} or \cite{fuhrmann1990}; more recent development can be found in e.g. \cite{seki2011}, \cite{standefervarietiesS5}, or \cite{ferenz2025}.} Given how closely the analysis of modality is tied to the broader relevant logic project, it is natural to consider species of hyperformalism in which occurrences of the modal $\Box$ have the same weight as $\rightarrow$ enjoys in the case of $\mathsf{DR}$. The results in this paper will thus concern extending the notion of hyperformalism to modal logics whose languages include $\Box$. We will begin in Section \ref{section:BBox} by considering modal extensions of one of the weakest standard relevant logics, $\logic{B}$, and investigating the types of strong hyperformal properties that hold of members of this family of modal logics, including what we will call $\mathsf{MPos}$-hyperformalism. We will take an alternative tack in Section \ref{section:KMPos} by describing and characterizing the largest $\mathsf{MPos}$-hyperformal fragment of the modal logic $\logic{K}$ through a tableaux system. We will conclude by considering the relationship between the two approaches and the prospects for variants of hyperformalism in the modal logic setting.

\section{$\logic{B}^{\Box}$ and Its Extensions}\label{section:BBox}

We begin by considering hyperformalism in the context of familiar relevant modal logics, starting from the base logic $\logic{B}^{\Box}$. To introduce the logic $\logic{B}^{\Box}$, let $\lang{L}$ be a propositional language formed in the usual way from the set of atomic formulas $\At$ using the connectives $\land$, $\lor$, $\neg$, and $\to$. Then $\logic{B}$ is axiomatized by the following list. 

\begin{multicols}{2}
    \begin{enumerate}[{BA}1.]
        \item $A\to A$
        \item $(A\land B)\to A$; $(A\land B)\to B$
        \item $A\to(A\lor B)$; $B\to(A\lor B)$
        \item $((A\to C)\land (B\to C))\to((A\lor B)\to C)$
        \item $((A\to B)\land (A\to C))\to(A\to (B\land C))$
        \item $(A\land(B\lor C))\to((A\land B)\lor(A\land C))$
        \item $\neg\neg A\to A$
    \end{enumerate}

    
    \begin{enumerate}[{BR}1.]
        \item $A, B\Rightarrow A\land B$
        \item $A\to B, A\Rightarrow B$
        \item $A\to\neg B\Rightarrow B\to\neg A$
        \item $A\to B, C\to D\Rightarrow(B\to C)\to(A\to D)$
    \end{enumerate}

\end{multicols}

\noindent Modal extensions of $\logic{B}$ follow the addition of one or more of the following axioms or rules; the base system with $\logic{B}^{\Box}$ defined by adding C and RM to $\logic{B}$:\footnote{We preserve the standard naming conventions as found in \emph{e.g.} \cite{chellas1980}.}

\begin{multicols}{2}

    \begin{enumerate}
        \item[C] $(\Box A\land\Box B)\to\Box(A\land B)$
        \item[T] $\Box A\to A$
        \item[4] $\Box A \to \Box\Box A$
        \item[K] $\Box(A\to B)\to(\Box A\to\Box B)$
    \end{enumerate}

      \begin{enumerate}
        \item[RM] $A\to B~\Rightarrow~\Box A\to\Box B$
        \item[RN] $A\Rightarrow\Box A$ 
    \end{enumerate}

\end{multicols}


\subsection{Non-Uniform Substitutions in Relevant Logics}

$\logic{B}$, along with a few other relevant logics, exhibits a few forms of hyperformalism worth describing. But we wil go about this in a mildly idiosyncratic way, since it will serve our aims below to do so. 

To begin, let $\positions{CPos}=\{\mathtt{c}\}^*$ be the Kleene closure of the set containing just $\mathtt{c}$.\footnote{Usually, instead of $\positions{CPos}$, one uses natural numbers and refers to these numbers as `depths'; the numerical value of depth, clearly, can be recovered by counting the instances of $\mathtt{c}$ in the position. See e.g \cite{logan2022depth,leach2024logic,WorleyForthcoming}. Sets that play the role that $\positions{CPos}$ plays here will be called sets of \textit{positions} below.} Say that a function $\sigma:\positions{CPos}\times\At\to\lang{L}$ is a $\positions{CPos}$-substitution on $\lang{L}$. $\positions{CPos}$-substitutions naturally extended to act on all of $\positions{CPos}\times\lang{L}$ via the recursive clauses in Figure~\ref{initialsubs}. A set of formulas $X$ is $\positions{CPos}$-hyperformal when for all $\positions{CPos}$-substitutions $\sigma$, if $A\in X$, then $\sigma_{\varepsilon}(A)\in X$. As discussed in the introduction, note that in $\positions{CPos}$-hyperformal logics, atoms occuring `at' different $\positions{CPos}$-labels in a formula are treated as independent to the extent that they can be replaced by different formulas with no worry about coordinating between the different parts of this replacement.

It turns out, as first shown in \cite{logan2022depth}, that $\logic{B}$ is $\positions{CPos}$-hyperformal. This fact is the first of several hyperformalism results one can now find in the literature (see \cite{standefer2024topics, logan2024hyperformalism, ferguson2023topic} for other such results). To describe another (much more powerful) such result, we first define $X^{*\mathtt{c}}$ to be $X^*\cup\{\mathtt{\overline{x}c\mid\overline{x}}\in X^*\}$ and let $\positions{LRCNPos}=\mathtt{\{l, r, n\}}^{*\mathtt{c}}$. 

What we just did with $\positions{CPos}$, we can now do with $\positions{LRCNPos}$. Explicitly, say $\positions{LRCNPos}$-substitutions are functions $\sigma:\positions{LRCNPos}\times\At\to\lang{L}$ for which $\sigma_{\mathtt{\overline{x}nn\overline{y}}}(p)=\sigma_{\mathtt{\overline{xy}}}(p)$ for all $\{\overline{\mathtt{x}},\overline{\mathtt{y}}\}\subseteq\positions{LRCNPos}$ and $p\in\At$.\footnote{This is a feature described in \cite{standefer2024topics} as ``faithfulness.''} We extend such a function to all of $\positions{LRCNPos}\times\lang{L}$ using the recursive clauses in Figure~\ref{initialsubs}. Say that a set of formulas $X$ is $\positions{LRCNPos}$-hyperformal when for all $\positions{LRCNPos}$-substitutions $\sigma$, if $A\in X$, then $\sigma_{\varepsilon}(A)\in X$. As shown in \cite{standefer2024topics}, it turns out that $\logic{B}$ is \emph{also} $\positions{LRCNPos}$-hyperformal.

\begin{figure}[t]
    \begin{multicols}{2}
        \mbox{}\vfill
        \noindent $\sigma_{\overline{\mathtt{x}}}(A\land B)=\sigma_{\overline{\mathtt{x}}}(A)\land \sigma_{\overline{\mathtt{x}}}(B)$\smallskip
        
        \noindent $\sigma_{\overline{\mathtt{x}}}(A\lor B)=\sigma_{\overline{\mathtt{x}}}(A)\lor \sigma_{\overline{\mathtt{x}}}(B)$\smallskip

        \noindent $\sigma_{\overline{\mathtt{x}}}(\neg A)=\neg \sigma_{\overline{\mathtt{x}}}(A)$\smallskip
		
        \noindent $\sigma_{\overline{\mathtt{x}}}(A\to B)=\sigma_{\mathtt{c\overline{x}}}(A)\to \sigma_{\mathtt{c\overline{x}}}(B)$
        \vfill\columnbreak

        \noindent $\sigma_{\overline{\mathtt{x}}}(A\land B)=\sigma_{\overline{\mathtt{x}}}(A)\land\sigma_{\overline{\mathtt{x}}}(B)$\smallskip
    
    	\noindent $\sigma_{\overline{\mathtt{x}}}(A\lor B)=\sigma_{\overline{\mathtt{x}}}(A)\lor\sigma_{\overline{\mathtt{x}}}(B)$\smallskip
    
    	\noindent $\sigma_{\overline{\mathtt{x}}}(\neg A)=\neg \sigma_{\mathtt{n}\overline{\mathtt{x}}}(A)$\smallskip
    
    	\noindent $\sigma_{\varepsilon}(A\to B)=\sigma_{\mathtt{c}}(A)\to\sigma_{\mathtt{c}}(B)$\smallskip
    
    	\noindent $\sigma_{\overline{\mathtt{x}}}(A\to B)=\sigma_{\mathtt{l}\overline{\mathtt{x}}}(A)\to\sigma_{\mathtt{r}\overline{\mathtt{x}}}(B)$ for $\overline{\mathtt{x}}\neq\varepsilon$
    \end{multicols}
\caption{$\positions{CPos}$-substitutions on the left; $\positions{LRCNPos}$-substitutions on the right.}\label{initialsubs}
\end{figure}

\begin{wrapfigure}[11]{R}{3.25in}
    \begin{mdframed}
        \noindent $\alpha_{\overline{\mathtt{x}}}(A\land B)=\min(\alpha_{\overline{\mathtt{x}}}(A), \alpha_{\overline{\mathtt{x}}}(B))$;\smallskip
    
        \noindent $\alpha_{\overline{\mathtt{x}}}(A\lor B)=\max(\alpha_{\overline{\mathtt{x}}}(A), \alpha_{\overline{\mathtt{x}}}(B))$;\smallskip
            
        \noindent $\alpha_{\overline{\mathtt{x}}}(\neg A)=1-\alpha_{\mathtt{n}\overline{\mathtt{x}}}(A)$;\smallskip
        
        \noindent $\alpha_{\varepsilon}(A\to B)=\max(1-\alpha_{\mathtt{c}}(A), \alpha_{\mathtt{c}}(B))$; and\smallskip
        
        \noindent $\alpha_{\overline{\mathtt{x}}}(A\to B)=\max(1-\alpha_{\mathtt{l\overline{x}}}(A), \alpha_{\mathtt{r\overline{x}}}(B))$ for $\overline{x}\neq\varepsilon$.
    \end{mdframed}
    \caption{$\positions{LRCNPos}$-assignments}\label{LRCNclauses}
\end{wrapfigure}

$\positions{LRCNPos}$-hyperformalism is interesting in other ways too. To see this, let an $\positions{LRCNPos}$-assignment be a function $\alpha:\positions{LRCNPos}\times\At\to\{0, 1\}$. $\positions{LRCNPos}$-assignments extend to all of $\positions{LRCNPos}\times\lang{L}$ by the recursive clauses given in Figure~\ref{LRCNclauses}. Say that $A$ is $\positions{LRCNPos}$-valid (and we write $\vDash_{\positions{LRCNPos}} A$) just if $\alpha_{\varepsilon}(A)=1$ for all $\positions{LRCNPos}$-assignments $\alpha$. $\positions{LRCNPos}$-validity can be tracked by a fairly obvious tableau system similar to the one we present below. In \cite{standefer2024topics}, the authors show not only that this tableau system is sound and complete for $\positions{LRCNPos}$-validity, but that the logic it captures is the largest $\positions{LRCNPos}$-hyperformal subset of classical logic and---more surprising yet---that it has the variable sharing property. 

So not only does hyperformalism point us in the direction of interesting features of known relevant logics, it also points us in the direction of previously unknown relevant logics with simple semantic theories as well. Our goal in this paper is to extend the techniques used in these results to the case of modal logics.

Following the guide in the introduction, we begin by examining the forms of hyperformalism exhibited by known relevant modal logics.\footnote{For more on relevant modal logics see \cite{fine1974models} and \cite{fuhrmann1990}.} We will begin our investigation by focusing first on $\logic{B}^\Box$, and will turn to stronger relevant modal logics in the subsequent section.

To define hyperformalism for modal relevant logics, we first need to do a bit of setup. Our initial set of positions is $\positions{MPos}=\mathtt{\{l, r, n, b\}}^{*\mathtt{c}}$. An $\positions{MPos}$-substitution is a function $\positions{MPos}\times\At\to\lang{L}$ for which $\sigma_{\mathtt{\overline{x}nn\overline{y}}}(p)=\sigma_{\mathtt{\overline{xy}}}(p)$ for all $\{\overline{\mathtt{x}},\overline{\mathtt{y}}\}\subseteq\positions{MPos}$ and $p\in\At$. We extend $\positions{MPos}$-substitutions to act on all of $\lang{L}$ by the recursive clauses in Figure~\ref{mpossubs}. 

\begin{figure}[t]
    \begin{multicols}{2}
    \begin{itemize}[leftmargin=*]
        \item $\sigma_{\overline{\mathtt{x}}}(A\land B)=\sigma_{\overline{\mathtt{x}}}(A)\land \sigma_{\overline{\mathtt{x}}}(B)$;        
        \item $\sigma_{\overline{\mathtt{x}}}(A\lor B)=\sigma_{\overline{\mathtt{x}}}(A)\lor \sigma_{\overline{\mathtt{x}}}(B)$;
	    \item $\sigma_{\overline{\mathtt{x}}}(\neg A)=\neg \sigma_{\mathtt{n}\overline{\mathtt{x}}}(A)$;
        \item $\sigma_{\varepsilon}(A\to B)=\sigma_{\mathtt{c}}(A)\to\sigma_{\mathtt{c}}(B)$;
	    \item \mbox{$\sigma_{\overline{\mathtt{x}}}(A\to B)=\sigma_{\mathtt{l\overline{x}}}(A)\to \sigma_{\mathtt{r\overline{x}}}(B)$} for $\overline{\mathtt{x}}\neq\varepsilon$; and
	    \item $\sigma_{\overline{\mathtt{x}}}(\Box A)=\Box \sigma_{\mathtt{b}\overline{\mathtt{x}}}(A)$.
    \end{itemize}
    \end{multicols}
\caption{$\positions{MPos}$-substitutions.}\label{mpossubs}
\end{figure}

\noindent We then say that a logic is $\positions{MPos}$\emph{-hyperformal} if its set of theorems is closed under $\positions{MPos}$-substitutions.

\subsection{The Base System}

As promised, we begin by examining $\logic{B}^{\Box}$. Given the preceding definitions, we can quickly demonstrate the main result concerning $\logic{B}^{\Box}:$




\begin{theorem}\label{theorem:BBox-Hyperformal}
    $\logic{B}^\Box$ is $\positions{MPos}$-hyperformal.
\end{theorem}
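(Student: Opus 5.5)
The plan is to prove, by induction on the length of a $\logic{B}^\Box$-derivation of $A$, that $\sigma_\varepsilon(A)$ is a theorem for \emph{every} $\positions{MPos}$-substitution $\sigma$. The induction hypothesis must quantify over all substitutions, because each rule case will call the hypothesis on a modified substitution built from $\sigma$.

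The engine is a small family of ``re-indexing'' lemmas. Each says that for a given $\sigma$ and position $\overline{\mathtt{z}}$ there is another $\positions{MPos}$-substitution $\sigma'$ with $\sigma'_{\mathtt{c}}(B)=\sigma_{\overline{\mathtt{z}}}(B)$ for all $B$. Each is proved by induction on $B$, checking on atoms that $\sigma'$ again satisfies the faithfulness condition $\sigma'_{\overline{\mathtt{x}}\mathtt{nn}\overline{\mathtt{y}}}=\sigma'_{\overline{\mathtt{x}}\overline{\mathtt{y}}}$. The instances I need are:
\begin{enumerate}[(i)]
\item \textbf{Shift.} For $\overline{\mathtt{z}}\in\{\mathtt{bc},\mathtt{lc},\mathtt{rc}\}$, put $\sigma'_{\overline{\mathtt{y}}\mathtt{c}}(p)=\sigma_{\overline{\mathtt{y}}\overline{\mathtt{z}}}(p)$. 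This works because neither $\mathtt{c}$ nor $\overline{\mathtt{z}}$ is $\varepsilon$, so both sides follow the same recursive clauses.
\item \textbf{Swap.} For contraposition, put $\sigma'_{\overline{\mathtt{y}}\mathtt{c}}=\sigma_{\overline{\mathtt{y}}\mathtt{nc}}$ and $\sigma'_{\overline{\mathtt{y}}\mathtt{nc}}=\sigma_{\overline{\mathtt{y}}\mathtt{c}}$. Faithfulness reduces this to one consistent assignment per $\mathtt{n}$-parity.
\item \textbf{Lowering} (the delicate one). We need $\sigma'_{\mathtt{c}}(B)=\sigma_\varepsilon(B)$. The only mismatch is a top-level arrow: $\sigma_\varepsilon$ sends both of its sides to $\mathtt{c}$, while $\sigma'_{\mathtt{c}}$ sends them to $\mathtt{lc}$ and $\mathtt{rc}$. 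Define $\sigma'_{\overline{\mathtt{y}}\mathtt{lc}}=\sigma'_{\overline{\mathtt{y}}\mathtt{rc}}=\sigma_{\overline{\mathtt{y}}\mathtt{c}}$ and $\sigma'_{\overline{\mathtt{y}}\mathtt{c}}=\sigma_{\overline{\mathtt{y}}}$ when $\overline{\mathtt{y}}$ ends in neither $\mathtt{l}$ nor $\mathtt{r}$. Positions of the form $\overline{\mathtt{y}}\mathtt{lc}$ with no $\mathtt{c}$ earlier arise in $\sigma'_{\mathtt{c}}(B)$ only below a top-level arrow, so this is well defined along every branch.
\end{enumerate}
Alongside these I need the routine fact that faithfulness lifts from atoms to arbitrary formulas at positions distinct from $\varepsilon$.

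\textbf{Axioms.} These are direct. Every axiom has the form $X\to Y$, so $\sigma_\varepsilon$ evaluates both sides at $\mathtt{c}$. Because $\land,\lor$ preserve position and $\to$, $\neg$, $\Box$ prefix the same symbols on both sides, corresponding subformulas land at identical positions. For instance, BA4 becomes an instance of BA4 with $\sigma_{\mathtt{lc}}(A)$, $\sigma_{\mathtt{lc}}(B)$, $\sigma_{\mathtt{rc}}(C)$. Likewise C becomes C with $\sigma_{\mathtt{bc}}(A)$, $\sigma_{\mathtt{bc}}(B)$. For BA7 the antecedent becomes $\neg\neg\sigma_{\mathtt{nnc}}(A)$, and the lifted faithfulness gives $\sigma_{\mathtt{nnc}}(A)=\sigma_{\mathtt{c}}(A)$.

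\textbf{Rules.}
\begin{itemize}
\item BR1 is immediate.
\item For BR2, apply the induction hypothesis to $A\to B$ under the lowering substitution $\sigma'$. This yields $\sigma_\varepsilon(A)\to\sigma_\varepsilon(B)$, and modus ponens with $\sigma_\varepsilon(A)$ finishes the case.
\item For BR3, the swapped $\sigma'$ turns the premise into $\sigma_{\mathtt{nc}}(A)\to\neg\sigma_{\mathtt{c}}(B)$. Applying BR3 gives $\sigma_{\mathtt{c}}(B)\to\neg\sigma_{\mathtt{nc}}(A)$, which is $\sigma_\varepsilon(B\to\neg A)$.
\item For BR4, the target is $(\sigma_{\mathtt{lc}}(B)\to\sigma_{\mathtt{rc}}(C))\to(\sigma_{\mathtt{lc}}(A)\to\sigma_{\mathtt{rc}}(D))$. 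The shifts to $\mathtt{lc}$ and $\mathtt{rc}$ turn the premises into $\sigma_{\mathtt{lc}}(A)\to\sigma_{\mathtt{lc}}(B)$ and $\sigma_{\mathtt{rc}}(C)\to\sigma_{\mathtt{rc}}(D)$, and one application of BR4 produces the target.
\item For RM, the shift to $\mathtt{bc}$ gives $\sigma_{\mathtt{bc}}(A)\to\sigma_{\mathtt{bc}}(B)$, and RM yields $\Box\sigma_{\mathtt{bc}}(A)\to\Box\sigma_{\mathtt{bc}}(B)=\sigma_\varepsilon(\Box A\to\Box B)$.
\end{itemize}

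The main obstacle is modus ponens, and specifically the lowering lemma (iii). It is the one place where the special $\varepsilon$-clause for $\to$ breaks the uniform bookkeeping, so I would check faithfulness and well-definedness there carefully.

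It is worth noting that RN is absent from $\logic{B}^\Box$. An RN case would require relating $\sigma_{\mathtt{b}}(A)$ to some $\sigma''_\varepsilon(A)$. That relation fails for implications, since $\sigma_{\mathtt{b}}(B\to C)=\sigma_{\mathtt{lb}}(B)\to\sigma_{\mathtt{rb}}(C)$ splits positions in a way $\sigma''_\varepsilon$ cannot reproduce when $B$ and $C$ share atoms. I expect this to be where extensions of $\logic{B}^\Box$ behave differently.
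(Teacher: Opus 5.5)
Your cases for C and RM coincide with the paper's: your shift to $\mathtt{bc}$ is exactly the paper's $\sigma^{\star}$. Your treatment of the $\logic{B}$ axioms, BR1, BR3 and BR4 is also correct; the paper does not write those cases out but defers them to Standefer et al.

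The genuine gap is the lowering lemma (iii), on which your BR2 case rests. Checking well-definedness branch by branch is not enough, because faithfulness is a global constraint that links positions reached on different branches. Under your definition, $\mathtt{lnn}$ ends in $\mathtt{n}$, so $\sigma'_{\mathtt{lnnc}}=\sigma_{\mathtt{lnn}}$, which agrees with $\sigma_{\mathtt{l}}$ on atoms. But $\sigma'_{\mathtt{lc}}=\sigma_{\mathtt{c}}$. Faithfulness of $\sigma'$ would therefore require $\sigma_{\mathtt{l}}(p)=\sigma_{\mathtt{c}}(p)$, which fails for general $\sigma$. No redefinition rescues this. The clause for $\to$ at the string $\mathtt{nn}\neq\varepsilon$ is the $\mathtt{l}/\mathtt{r}$ clause. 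Hence any $\sigma'$ with $\sigma'_{\mathtt{c}}(X)=\sigma_\varepsilon(X)$ for both $X=p\to q$ and $X=\neg\neg(p\to q)$ must satisfy $\sigma'_{\mathtt{lc}}(p)=\sigma_{\mathtt{c}}(p)$ and $\sigma'_{\mathtt{lnnc}}(p)=\sigma_{\mathtt{l}}(p)$. That is incompatible with faithfulness.

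In fact, with the clauses exactly as written (the special $\to$-clause firing only at the literal empty string), the modus ponens case is false, not merely unproved:
\begin{itemize}
\item BR3 applied to the BA1 instance $\neg(p\to p)\to\neg(p\to p)$ yields $(p\to p)\to\neg\neg(p\to p)$.
\item BR2 with $p\to p$ then makes $\neg\neg(p\to p)$ a theorem of $\logic{B}$, hence of $\logic{B}^\Box$.
\item Yet $\sigma_\varepsilon(\neg\neg(p\to p))=\neg\neg(\sigma_{\mathtt{l}}(p)\to\sigma_{\mathtt{r}}(p))$, which is not even classically valid when $\sigma_{\mathtt{l}}(p)$ and $\sigma_{\mathtt{r}}(p)$ are distinct atoms.
\end{itemize}
The paper never meets this problem because it only verifies C and RM.

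To get a true statement, the $\varepsilon$-clause for $\to$ must fire at every position $\sim_{\mathtt{nn}}\varepsilon$. Equivalently, index substitutions by $\sim_{\mathtt{nn}}$-classes, as the paper in effect does for $\positions{MPos}_S$. Under that reading, define the lowering on $\mathtt{nn}$-reduced forms:
\begin{itemize}
\item $\sigma'_{\overline{\mathtt{z}}\mathtt{c}}(p)=\sigma_{\overline{\mathtt{w}}\mathtt{c}}(p)$ if the reduced form of $\overline{\mathtt{z}}$ is $\overline{\mathtt{w}}\mathtt{l}$ or $\overline{\mathtt{w}}\mathtt{r}$;
\item $\sigma'_{\overline{\mathtt{z}}\mathtt{c}}(p)=\sigma_{\overline{\mathtt{z}}}(p)$ otherwise.
\end{itemize}
This $\sigma'$ is faithful, and lifted faithfulness now holds at all positions. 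It satisfies $\sigma'_{\mathtt{c}}(X)=\sigma_\varepsilon(X)$ for every $X$, and with it your MP argument and the remaining cases of your induction go through unchanged.
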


\begin{proof}
    By induction on derivations. The details for the axioms and rules of $\logic{B}$ follow from arguments virtually identical to those provided for Theorems 2 and 3 of \cite{standefer2024topics}, so we consider the axiom C and rule RM. 

   For C, we must show that every substitution applied to an instance of this axiom is itself provable. For an arbitrary substitution $\sigma$, some manipulation reveals that $\sigma_{\varepsilon}((\Box A\wedge\Box B)\rightarrow\Box(A\wedge B))$ is $(\Box \sigma_{\mathtt{bc}}(A)\wedge\Box\sigma_{\mathtt{bc}}(B))\rightarrow \Box(\sigma_{\mathtt{bc}}(A)\wedge\sigma_{\mathtt{bc}}(B))$. But this is another instance of C, so provable.
    
   For RM, we must show that for an arbitrary substitution $\sigma$ that if $\Box A\rightarrow\Box B$ is a theorem due to an application of RM, then $\sigma_{\varepsilon}(\Box A\rightarrow\Box B)$ is a theorem as well. So suppose that $\Box A\rightarrow \Box B$ is a theorem of $\logic{B}^{\Box}$ yielded by the application of RM to an earlier theorem $A\rightarrow B$. Fix an arbitrary $\sigma$ and define an further substitution $\sigma^{\star}$ by the following scheme:

    \begin{center}
        $\sigma^{\star}_\mathtt{\overline{x}}(p)=
        \begin{cases}
            \sigma_{\mathtt{\overline{y}bc}}(p) & \mbox{ if }\mathtt{\overline{x}}=\mathtt{\overline{y}c}\\
            \sigma_{\mathtt{\overline{x}}}(p) & \mbox{ otherwise}
        \end{cases}$
    \end{center}
    
    \noindent Clearly, $\sigma^{\star}$ is faithful and is thus itself a substitution. Moreover, a trivial induction shows that $\sigma^*_{\mathtt{\overline{x}}}(A)=\sigma_{\mathtt{\overline{y}bc}}(A)$ when $\mathtt{\overline{x}=\overline{y}c}$ for arbitrary (not necessarily atomic) formulas $A$. Consequently, by induction hypothesis, $\sigma^{\star}_{\varepsilon}(A\rightarrow B)$ is a theorem, whence $\sigma^{\star}_{\mathtt{c}}(A)\rightarrow\sigma^{\star}_{\mathtt{c}}(B)$ is a theorem, whence $\sigma_{\mathtt{bc}}(A)\rightarrow\sigma_{\mathtt{bc}}(B)$ is a theorem. Take a derivation of this formula and apply RM to yield $\Box\sigma_{\mathtt{bc}}(A)\rightarrow\Box\sigma_{\mathtt{bc}}(B)$. With some manipulation, this establishes theoremhood of $\sigma_{\varepsilon}(\Box A\rightarrow\Box B)$; as $\sigma$ was arbitrary, we infer that RM preserves hyperformalism.
\end{proof}

\noindent Now, we can turn to the matter of the types of variable sharing properties that are enjoyed by $\logic{B}^{\Box}$. As in \cite{logan2022depth,ferguson2023topic,standefer2024topics}, non-uniform substitutions can be applied to show that $\logic{B}^{\Box}$ enjoys very strong versions of the traditional hallmark of relevant logics. First, consider some definitions and lemmas:

\begin{definition}
    For two positions $\mathtt{\overline{x}}$ and $\mathtt{\overline{y}}$, say that $\mathtt{\overline{x}}\sim_{\mathtt{nn}}\mathtt{\overline{y}}$ if one can produce one from the other by the addition or deletion of instances of the string $\mathtt{nn}$. Say that a substitution $\sigma$ is \emph{atomic} in case its range is $\mathsf{At}$ and \emph{injective up to faithfulness} in case $\sigma_{\mathtt{\overline{x}}}(A)=\sigma_{\mathtt{\overline{y}}}(B)$ holds if and only if $\mathtt{\overline{x}}\sim_{\mathtt{nn}}\mathtt{\overline{y}}$ and $A=B$.
\end{definition}

\noindent The ``faithfulness'' of the equivalence relation $\sim_{\mathtt{nn}}$ rests in its assurance that axioms reflecting double negation introduction and elimination are preserved. In considering the results for modal logics based on weak relevant logics in which axioms like $A\rightarrow\neg\neg A$ fail---like $\logic{BM}$---one might be interested in relaxing the assumption of faithfulness. Note also that in proving Theorem \ref{theorem:BBox-Hyperformal}, we had assumed a condition that $\sigma_{\overline{\mathtt{x}}\mathtt{nn}\overline{\mathtt{y}}}(p)=\sigma_{\overline{\mathtt{xy}}}(p)$ for all atoms $p$, considerations of elegance lead us to relax the condition and assume only injectivity up to faithfulness instead.

For a formula $A$ and instance of a subformula $B$ appearing in $A$, we define the position of $B$ in $A$---$\mathsf{pos}(A[B])$---to be the position in (element of) $\mathsf{MPos}$ corresponding to the syntactic path through which one arrives at $B$ within $A$. For example, $\mathsf{pos}( \Box(p\rightarrow\Box q)[q])=\mathtt{brb}$. (For more precise recursive details, the reader is invited to adapt the steps described in \cite{ferguson2023topic} or \cite{standefer2024topics}.)

We will first state the strongest $\positions{MPos}$-based variable sharing property available in the case of $\mathsf{B}^{\Box}$.

\begin{definition}
A logic in the language $\mathbf{L}$ enjoys the $\mathsf{MPos}$-variable sharing property in case whenever $A\rightarrow B$ is provable, there are $A[p]$ in $A$ and $B[p]$ in $B$ such that $\mathsf{pos}(A[p]\rightarrow B)\sim_{\mathtt{nn}}\mathsf{pos}(A\rightarrow B[p])$.
\end{definition}

\noindent To prove this, we first observe the following:

\begin{lemma}\label{lemma:vanilla-vsp}
    $\logic{B}^{\Box}$ has the variable sharing property.
\end{lemma}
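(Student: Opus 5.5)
Since $\logic{B}^{\Box}$ is a modal extension of $\logic{B}$, the plan is to prove the variable sharing property semantically, using a matrix in the style of Belnap's original proof for $\logic{R}$ and its sublogics. We will fix a finite matrix $\mathcal{M}$ that validates $\logic{B}$ (in fact $\logic{R}$) and has Belnap's separation feature: its values split into two disjoint, nonempty sets $S_1$ and $S_2$, each closed under $\land$, $\lor$, $\neg$. For $a\in S_1$ and $b\in S_2$, $a\to b$ is undesignated; similarly for $a \in S_2$, $b\in S_1$. The natural choice is Belnap's eight-element matrix $M_0$, with $S_1=\{-1,+1\}$ (up to sign, the "middle" values) and $S_2$ the remaining six values.

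We then interpret $\Box$ as the identity function on $\mathcal{M}$. The work is to check that this interpretation keeps the modal principles valid.
\begin{itemize}
\item Under identity, C becomes $(a\land b)\to(a\land b)$, which is designated because BA1 holds in $\mathcal{M}$.
\item RM preserves validity trivially: the conclusion $\Box A\to\Box B$ receives exactly the value of the premise $A\to B$.
\end{itemize}
The $\logic{B}$ axioms and rules are already known to hold in $M_0$. Hence every theorem of $\logic{B}^{\Box}$ is designated under every assignment. Since identity preserves both $S_1$ and $S_2$, the closure properties of those sets survive the addition of $\Box$.

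With the matrix in place, the argument runs as follows. Suppose $A\to B$ is provable but $A$ and $B$ share no atom. Assign every atom of $A$ a value in $S_1$ and every atom of $B$ a value in $S_2$. By closure, $A$ takes a value in $S_1$ and $B$ takes a value in $S_2$. Then $A\to B$ is undesignated, which contradicts soundness.

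The main point needing care is confirming that $M_0$ actually has the separation feature and validates BR3 and BR4. Here we will cite Belnap's verification for $\logic{R}$, noting that $\logic{B}\subseteq\logic{R}$. Should identity for $\Box$ cause any trouble, there is a fallback: translate by erasing all boxes. This sends C to an instance of BA1 and RM to the identity rule, so theorems of $\logic{B}^{\Box}$ map to theorems of $\logic{B}$ with the same atoms. The variable sharing property of $\logic{B}$ then transfers directly to $\logic{B}^{\Box}$.
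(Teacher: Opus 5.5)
Your box-erasure fallback is correct and is by itself a complete proof. The primary matrix argument, however, fails as stated. The six values you put in $S_2$ include $+3$ and $-3$, the top and bottom of $M_0$. In $M_0$ every implication $a\to +3$ and every implication $-3\to b$ is designated. So the claimed separation is false: for instance $p\mapsto +1$, $q\mapsto +3$ respects your split, yet makes $p\to q$ designated. Your list of requirements on $S_1,S_2$ also omits closure under $\to$ (and under $\Box$). Without that, the step ``$A$ takes a value in $S_1$'' does not go through for formulas containing conditionals. Belnap's actual argument uses two subalgebras, $\{\pm 1\}$ and $\{\pm 2\}$. Each is closed under all the connectives including $\to$, and every implication from one to the other takes the value $-3$. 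With $S_2=\{\pm 2\}$ and $\Box$ read as the identity, your matrix argument works, and it is then just the semantic counterpart of the erasure argument.

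The erasure route is genuinely different from the paper's. The paper gets the lemma in one line: it cites the variable sharing property of $\logic{RS5}$ from Standefer and notes that $\logic{B}^{\Box}\subseteq\logic{RS5}$. You instead show, by a trivial induction on derivations, that deleting every $\Box$ sends theorems of $\logic{B}^{\Box}$ to theorems of $\logic{B}$ without changing the set of atoms. Under erasure, C becomes an instance of BA1 and RM becomes a vacuous rule. You then appeal only to Belnap's classical result for $\logic{R}\supseteq\logic{B}$. What each approach buys:
\begin{itemize}
\item The paper's route is shorter, but it leans on a considerably stronger external modal theorem.
\item Your route is more elementary and self-contained, and needs no relevant modal logic at all.
\item Your route handles every $\logic{B}^{\Box}_S$ uniformly: T, 4 and K also erase to instances of BA1, and RN erases to a vacuous rule. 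That plain variable sharing is what the later $\positions{MPos}_S$ theorem tacitly requires.
\item Your route also makes plain that ordinary variable sharing carries no modal information whatsoever. This is exactly why the paper needs hyperformalism to extract its box-sensitive refinements.
\end{itemize}
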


\begin{proof}
    This follows from Theorem 4.4 of \cite{standefervarietiesS5}, which shows that the relevant modal logic $\logic{RS5}$ has the variable sharing property. Insofar as $\logic{B}^{\Box}$ is a subsystem of $\logic{RS5}$, any theorem of the former is a theorem of the latter and thus must have the variable sharing property.
\end{proof}

\noindent Lemma \ref{lemma:vanilla-vsp} provides the necessary groundwork to begin the following theorem:

\begin{theorem}\label{theorem:BBox-MPosVSP}
    $\logic{B}^{\Box}$ has the $\mathsf{MPos}$-variable sharing property.
\end{theorem}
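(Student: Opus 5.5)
The proof combines Theorem~\ref{theorem:BBox-Hyperformal} with Lemma~\ref{lemma:vanilla-vsp}, in the way depth-relevance and topic-relevance results are usually obtained from hyperformalism. Suppose $A\to B$ is a theorem of $\logic{B}^{\Box}$. We fix an atomic $\positions{MPos}$-substitution $\sigma$ that is injective up to faithfulness. Such a substitution exists because $\At$ is countably infinite and $\positions{MPos}\times\At$ modulo $\sim_{\mathtt{nn}}$ is countable. We may therefore send each $\sim_{\mathtt{nn}}$-class $[\overline{\mathtt{x}}]$ and atom $p$ to a distinct fresh atom $p_{[\overline{\mathtt{x}}]}$. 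Faithfulness, $\sigma_{\overline{\mathtt{x}}\mathtt{nn}\overline{\mathtt{y}}}(p)=\sigma_{\overline{\mathtt{x}}\overline{\mathtt{y}}}(p)$, then holds automatically. Since $\sigma$ is a legitimate substitution, Theorem~\ref{theorem:BBox-Hyperformal} makes $\sigma_{\varepsilon}(A\to B)=\sigma_{\mathtt{c}}(A)\to\sigma_{\mathtt{c}}(B)$ a theorem.

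The next step is a bookkeeping lemma, proved by induction on the construction of formulas. It says that for every occurrence $C[q]$ of an atom $q$ in a formula $C$ and every position $\overline{\mathtt{x}}$, the formula $\sigma_{\overline{\mathtt{x}}}(C)$ has, at the corresponding place, the atom $\sigma_{\overline{\mathtt{y}}}(q)$. Here $\overline{\mathtt{y}}$ is the position obtained by composing $\overline{\mathtt{x}}$ with the path to that occurrence. Applied to $C=A\to B$ and $\overline{\mathtt{x}}=\varepsilon$, an occurrence $A[p]$ becomes $\sigma_{\mathsf{pos}(A[p]\to B)}(p)$ and an occurrence $B[p]$ becomes $\sigma_{\mathsf{pos}(A\to B[p])}(p)$. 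Conversely, every atom occurring in $\sigma_{\mathtt{c}}(A)$ or $\sigma_{\mathtt{c}}(B)$ arises this way, since $\sigma$ is atomic. The inductive clauses follow Figure~\ref{mpossubs} directly: $\neg$ prepends $\mathtt{n}$, $\Box$ prepends $\mathtt{b}$, the top-level $\to$ gives $\mathtt{c}$, and embedded arrows give $\mathtt{l}$ and $\mathtt{r}$. This is exactly how $\mathsf{pos}$ is defined, so the two agree once the string-order convention for $\mathsf{pos}$ is matched to the prepending convention of the substitution clauses.

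Lemma~\ref{lemma:vanilla-vsp} now gives an atom common to $\sigma_{\mathtt{c}}(A)$ and $\sigma_{\mathtt{c}}(B)$. By the bookkeeping lemma, this atom is $\sigma_{\overline{\mathtt{x}}}(p)=\sigma_{\overline{\mathtt{y}}}(p')$, where $\overline{\mathtt{x}}=\mathsf{pos}(A[p]\to B)$ and $\overline{\mathtt{y}}=\mathsf{pos}(A\to B[p'])$. Injectivity up to faithfulness forces $p=p'$ and $\overline{\mathtt{x}}\sim_{\mathtt{nn}}\overline{\mathtt{y}}$, which is the $\positions{MPos}$-variable sharing property.

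The main obstacle is the bookkeeping lemma, in particular matching the position conventions. The issue is the special treatment of $\varepsilon$ in the $\to$ clause, together with the direction in which symbols are prepended in $\sigma$ versus how $\mathsf{pos}$ is read. If these conventions diverge, I would define $\mathsf{pos}$ recursively so as to mirror Figure~\ref{mpossubs} exactly. I would then verify the lemma by a simultaneous induction over all starting positions $\overline{\mathtt{x}}$. This generalisation is needed because the $\mathtt{l}/\mathtt{r}$ clause only applies at nonempty positions.
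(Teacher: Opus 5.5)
Your proposal is correct and is essentially the paper's proof. The paper likewise fixes an atomic substitution that is injective modulo $\sim_{\mathtt{nn}}$ (via the map $\mathsf{g}$), applies Theorem~\ref{theorem:BBox-Hyperformal} to get $\sigma_{\mathtt{c}}(A)\to\sigma_{\mathtt{c}}(B)$ as a theorem, invokes Lemma~\ref{lemma:vanilla-vsp} for a shared atom, and uses injectivity to conclude $\mathsf{pos}(A[p]\to B)\sim_{\mathtt{nn}}\mathsf{pos}(A\to B[p])$. Your bookkeeping lemma, with its induction over all starting positions to handle the $\varepsilon$-versus-nonempty arrow clause, just makes explicit what the paper compresses into ``consideration of the definition of $\mathsf{g}$.''
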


\begin{proof}
    Suppose that $A\rightarrow B$ is a theorem of $\logic{B}^{\Box}$. Choose a function $\mathsf{g}$ from pairs $\langle\mathtt{\overline{x}},p\rangle\in \mathsf{MPos}\times\mathsf{At}$ to $\At$ such that $\mathsf{g}(\langle\mathtt{\overline{x}},p\rangle)=\mathsf{g}(\langle\mathtt{\overline{y}},q\rangle)$ iff $\mathtt{\overline{x}\sim_{nn}\overline{y}}$ and $p=q$. Define the $\mathsf{MPos}$-substitution $\sigma$ by $\sigma_{\mathtt{\overline{x}}}(p)=p_{\mathsf{g}(\langle\mathtt{\overline{x}},p\rangle)}$. By Theorem \ref{theorem:BBox-Hyperformal}, $\sigma_{\mathtt{c}}(A)\rightarrow\sigma_{\mathtt{c}}(B)$ is a theorem of $\logic{B}^{\Box}$. By Lemma \ref{lemma:vanilla-vsp}, $\sigma_{\mathtt{c}}(A)$ and $\sigma_{\mathtt{c}}(B)$ share some variable $p_{k}$, but consideration of the definition of $\mathsf{g}$ entails that an atom $p_{k}$ appears in both $A$ and $B$ only in case $\mathsf{pos}(A[p_{k}])\sim_{\mathtt{nn}}\mathsf{pos}(B[p_{k}])$, witnessing the $\mathsf{MPos}$-variable sharing property.  
\end{proof}

\noindent This also allows the description of some more simply-stated---but nevertheless strong---versions of the variable sharing property.

\begin{corollary}\label{corollary:boxes}
    If $A\rightarrow B$ is a theorem of $\logic{B}^{\Box}$, then $A$ and $B$ share a propositional variable that is nested within the same number of instances of $\Box$ in both $A$ and $B$.
\end{corollary}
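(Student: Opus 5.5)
The plan is to derive the corollary directly from Theorem~\ref{theorem:BBox-MPosVSP}, by counting occurrences of the letter $\mathtt{b}$ in positions. Suppose $A\rightarrow B$ is a theorem of $\logic{B}^{\Box}$. By Theorem~\ref{theorem:BBox-MPosVSP} there are an occurrence $A[p]$ in $A$ and an occurrence $B[p]$ in $B$ with $\mathsf{pos}(A[p]\rightarrow B)\sim_{\mathtt{nn}}\mathsf{pos}(A\rightarrow B[p])$. It then remains to check two things. First, the number of $\Box$'s in whose scope an occurrence of $p$ lies equals the number of $\mathtt{b}$'s in its position. Second, $\sim_{\mathtt{nn}}$ preserves the number of $\mathtt{b}$'s.

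For the first point, I would argue by induction on formulas, following the recursive definition of $\mathsf{pos}$ (the clauses mirror Figure~\ref{mpossubs}). The claim is that the number of $\mathtt{b}$'s in $\mathsf{pos}(C[D])$ equals the number of occurrences of $\Box$ whose scope contains that occurrence of $D$. Each clause prepends a letter when passing through a connective. The letter is $\mathtt{n}$ for $\neg$; for $\to$ it is $\mathtt{c}$ at the top level and $\mathtt{l}$ or $\mathtt{r}$ below it; for $\wedge$ and $\vee$ nothing is prepended; and $\mathtt{b}$ is prepended exactly when passing through $\Box$. Hence $\mathtt{b}$'s and enclosing boxes are in bijection.

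The outer $\rightarrow$ of $A\rightarrow B$ contributes only the single letter $\mathtt{c}$, since it sits at position $\varepsilon$. So the number of $\mathtt{b}$'s in $\mathsf{pos}(A[p]\rightarrow B)$ equals the number of boxes around that occurrence of $p$ within $A$, and likewise for $B$.

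For the second point, note that $\sim_{\mathtt{nn}}$ is generated by inserting or deleting the substring $\mathtt{nn}$. Such an operation changes no $\mathtt{b}$, so the $\mathtt{b}$-count is an invariant of the equivalence relation. Combining the two points, the shared variable $p$ lies within the same number of $\Box$'s in $A$ as in $B$, which proves the corollary.

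The only step with any real content is making the correspondence between $\mathtt{b}$'s in $\mathsf{pos}$ and enclosing boxes precise. The paper defines $\mathsf{pos}$ only by reference to \cite{ferguson2023topic,standefer2024topics}, so I would spell out its recursive clauses (matching Figure~\ref{mpossubs}) before running the induction. The rest is immediate.
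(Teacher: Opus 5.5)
Your proposal is correct and is essentially the argument the paper intends. The paper states this corollary without a separate proof, as an immediate consequence of Theorem~\ref{theorem:BBox-MPosVSP}. Your two observations are exactly what makes that consequence immediate: the number of $\mathtt{b}$'s in a position equals the number of enclosing $\Box$'s (the outer $\to$ contributes only $\mathtt{c}$), and inserting or deleting $\mathtt{nn}$ leaves that count unchanged.
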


\noindent If one, like  \cite{standefer2024topics}, interprets relevance in terms of subject-matter,  Lemma \ref{lemma:vanilla-vsp} constitutes a reflection of the discussion of \cite{ferguson2023c}, that is, that $\Box p$ and $p$ may be irrelevant to one another reinforces an interpretation of the intensional $\Box$ as modifying the subject-matter of sentences to which it applies.

\subsection{Extensions of $\logic{B}^\Box$}

Now, we proceed to consider some extensions to $\mathsf{B}^{\Box}$ and consider hyperformality in such extensions. For $S\subseteq \{\mathrm{T},\mathrm{4},\mathrm{K},\mathrm{RN}\}$, let $\logic{B}^\Box_S$ be the logic we get by adding the members of $S$ as additional axioms/rules. To define hyperformalism for these logics, we define relations on $\positions{MPos}$ corresponding to T, 4, K, and RN by saying for all $\mathtt{\overline{x}}$ and $\mathtt{\overline{y}}$ in $\positions{MPos}$ that $\mathtt{\overline{x}b\overline{y}}\sim_{\mathrm{T}}\mathtt{\overline{xy}}$, that $\mathtt{\overline{x}b\overline{y}}\sim_{\mathrm{4}}\mathtt{\overline{x}bb\overline{y}}$, that $\mathtt{\overline{x}bl\overline{y}}\sim_{\mathrm{K}}\mathtt{\overline{x}lb\overline{y}}$ and $\mathtt{\overline{x}br\overline{y}}\sim_{\mathrm{K}}\mathtt{\overline{x}rb\overline{y}}$, and that $\mathtt{\overline{x}lb}\sim_{\mathrm{RN}}\mathtt{\overline{x}rb}$.

Where $S\subseteq \{\mathrm{T},\mathrm{4},\mathrm{K},\mathrm{RN}\}$, let $\sim_S$ be the smallest equivalence relation generated by $\sim_{\mathtt{nn}}\cup\bigcup_{s\in S}\sim_s$ and write $[\overline{\mathtt{x}}]_S$ for the $\sim_S$-equivalence class of $\overline{\mathtt{x}}$. Let $\positions{MPos}_S=\{[\overline{\mathtt{x}}]_S:\overline{\mathtt{x}}\in\positions{MPos}\}$ and $\positions{MPos}_S$-substitutions be functions $\sigma:\positions{MPos}_S\times\At\to\lang{L}$. $\positions{MPos}_S$-substitutions extend as indicated in the previous section. Say that a set of formulas $X$ is $\positions{MPos}_S$-hyperformal just if for all $A\in X$ and all $\positions{MPos}_S$-substitutions $\alpha$,    $\alpha_{[\varepsilon]_S}(A)\in X$. 





 We can now state the result for extensions of $\logic{B}^{\Box}$:

\begin{theorem}
    For each $S\subseteq \{\mathrm{T},\mathrm{4},\mathrm{K},\mathrm{RN}\}$, $\logic{B}^\Box_{S}$ is $\positions{MPos}_S$-hyperformal.
\end{theorem}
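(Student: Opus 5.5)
We prove this by induction on derivations in $\logic{B}^\Box_S$, exactly as for Theorem~\ref{theorem:BBox-Hyperformal}. The key observation is that an $\positions{MPos}_S$-substitution is the same thing as an $\positions{MPos}$-substitution $\sigma$ that is constant on $\sim_S$-classes. Such a $\sigma$ is automatically faithful, because $\sim_{\mathtt{nn}}\subseteq\sim_S$. So for every axiom or rule of $\logic{B}^\Box$ we can reuse the cases already handled, once we check two things. First, the auxiliary substitution $\sigma^\star$ built in the RM case must again respect $\sim_S$. Second, each new axiom or rule in $S$ must be preserved.

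We first record that every relation $\sim_s$ is a congruence under left-prefixing: if $\overline{\mathtt{x}}\sim_S\overline{\mathtt{y}}$, then $\mathtt{z}\overline{\mathtt{x}}\sim_S\mathtt{z}\overline{\mathtt{y}}$ for letters $\mathtt{z}\in\{\mathtt{l},\mathtt{r},\mathtt{n},\mathtt{b}\}$. This holds for $\sim_{\mathrm{T}}$, $\sim_4$, $\sim_{\mathrm{K}}$ and $\sim_{\mathtt{nn}}$ because their generating clauses allow an arbitrary prefix $\overline{\mathtt{x}}$. For $\sim_{\mathrm{RN}}$ it also holds, since that clause fixes only the suffix. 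One consequence is that $\sigma_{[\overline{\mathtt{x}}]}(A)$ is well defined by the recursive clauses for arbitrary formulas $A$. A simple induction on $A$ then shows $\sigma_{\overline{\mathtt{x}}}(A)=\sigma_{\overline{\mathtt{y}}}(A)$ whenever $\overline{\mathtt{x}}\sim_S\overline{\mathtt{y}}$.

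The new axioms are handled by direct computation. For T, $\sigma_\varepsilon(\Box A\to A)=\Box\sigma_{\mathtt{bc}}(A)\to\sigma_{\mathtt{c}}(A)$, and $\mathtt{bc}\sim_{\mathrm{T}}\mathtt{c}$, so the result is an instance of T. For 4, we get $\Box\sigma_{\mathtt{bc}}(A)\to\Box\Box\sigma_{\mathtt{bbc}}(A)$, and $\mathtt{bc}\sim_4\mathtt{bbc}$. For K, the left side becomes $\Box(\sigma_{\mathtt{lbc}}(A)\to\sigma_{\mathtt{rbc}}(B))$ and the right side becomes $\Box\sigma_{\mathtt{blc}}(A)\to\Box\sigma_{\mathtt{brc}}(B)$. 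These match via $\sim_{\mathrm{K}}$, so the result is an instance of K. For RN, suppose $A$ is a theorem. We need $\Box\sigma_{\mathtt{b}}(A)$, so we define $\sigma^\star_{\overline{\mathtt{x}}}=\sigma_{\overline{\mathtt{x}}\mathtt{b}}$ on classes. By the induction hypothesis, $\sigma^\star_\varepsilon(A)=\sigma_{\mathtt{b}}(A)$ is a theorem, and we apply RN. However, the clause distinguishing $\varepsilon$ from nonempty positions for $\to$ means that $\sigma^\star$ computes $A$ at top level with $\mathtt{c}$ where $\sigma$ would use $\mathtt{l}\mathtt{b}$ and $\mathtt{r}\mathtt{b}$. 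This is exactly what $\sim_{\mathrm{RN}}$ ($\mathtt{\overline{x}lb}\sim\mathtt{\overline{x}rb}$) is for. So we define $\sigma^\star_{\overline{\mathtt{y}}\mathtt{c}}=\sigma_{\overline{\mathtt{y}}\mathtt{lb}}$, which equals $\sigma_{\overline{\mathtt{y}}\mathtt{rb}}$, and $\sigma^\star_{\overline{\mathtt{x}}}=\sigma_{\overline{\mathtt{x}}\mathtt{b}}$ for $\mathtt{c}$-free positions. In the same way, the RM case uses $\sigma^\star_{\overline{\mathtt{y}}\mathtt{c}}=\sigma_{\overline{\mathtt{y}}\mathtt{bc}}$. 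Similar suffix-rewriting maps are needed in the $\logic{B}$ rules BR3 and BR4, as in \cite{standefer2024topics}.

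The main obstacle is checking that these suffix-rewriting maps are well defined on $\sim_S$-classes, so that $\sigma^\star$ really is an $\positions{MPos}_S$-substitution. Since the $\sim_s$ are generated by local rewrites, we reduce this to generators. For a single rewrite step $\overline{\mathtt{u}}\sim_s\overline{\mathtt{v}}$, we must show the images are $\sim_S$-related. The care is needed at the suffix boundary, for example a T-deletion of the last $\mathtt{b}$, or an RN rewrite touching the final $\mathtt{c}$/$\mathtt{lb}$. There we would argue case by case. Where a generator seems to fail, for instance T collapsing $\mathtt{bc}$ to $\mathtt{c}$, which changes the form of a position, we would verify that the translated positions are still related by T or $\mathtt{nn}$ applied inside the image. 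If that fails, we would restrict the RM/RN translation to positions actually occurring in the derivation. We would then finish with the routine verification that $\sigma^\star_\varepsilon$ of the premise equals the needed formula.
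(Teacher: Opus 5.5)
Your route is the paper's: induction on derivations, direct computation for T, 4 and K, and for RN the auxiliary substitution induced by $\overline{\mathtt{y}}\mathtt{c}\mapsto\overline{\mathtt{y}}\mathtt{lb}$ and $\overline{\mathtt{x}}\mapsto\overline{\mathtt{x}}\mathtt{b}$ for $\mathtt{c}$-free $\overline{\mathtt{x}}$. That is exactly the paper's $\sigma^{\star\star}$. The genuine gap is the step you call the main obstacle and leave open: that this suffix map respects $\sim_S$. It is not a routine check, and with $\sim_{\mathrm{RN}}$ as defined it fails for $S=\{\mathrm{RN}\}$. The generating pair $\overline{\mathtt{x}}\mathtt{lb}\sim_{\mathrm{RN}}\overline{\mathtt{x}}\mathtt{rb}$ is sent to $\overline{\mathtt{x}}\mathtt{lbb}$ and $\overline{\mathtt{x}}\mathtt{rbb}$, which are not $\sim_{\{\mathrm{RN}\}}$-equivalent. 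To see this, erase every $\mathtt{n}$; the resulting word is preserved by $\mathtt{nn}$-moves. An RN-move needs that word to end in $\mathtt{lb}$ or $\mathtt{rb}$, but here it ends in $\mathtt{bb}$. So no RN-move is ever available, and the letter $\mathtt{l}$ or $\mathtt{r}$ stays fixed.

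Your fallback of restricting to positions occurring in the derivation cannot rescue this, because the statement itself is false for this $S$. $\Box\Box(p\to p)$ is a theorem of $\logic{B}^\Box_{\{\mathrm{RN}\}}$ (BA1 and two applications of RN). Yet an $\positions{MPos}_{\{\mathrm{RN}\}}$-substitution may send $p$ to $q$ at $[\mathtt{lbb}]$ and to $s$ at $[\mathtt{rbb}]$, producing $\Box\Box(q\to s)$. That formula is not valid in classical $\logic{K}$, and every axiom and rule of $\logic{B}^\Box_{\{\mathrm{RN}\}}$ is $\logic{K}$-sound, so it is not a theorem. The paper's proof simply asserts that $\sigma^{\star\star}$ respects $\sim_S$, so it has the same hole.

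What is missing is the generator-by-generator check you postponed, and carrying it out shows exactly where the result holds. For the RM map ($\overline{\mathtt{y}}\mathtt{c}\mapsto\overline{\mathtt{y}}\mathtt{bc}$, identity on $\mathtt{c}$-free positions) it succeeds for every $S$. The $\mathtt{nn}$, T, 4 and K generators allow an arbitrary suffix and never change whether a position ends in $\mathtt{c}$, so suffix rewriting sends generating pairs to generating pairs. RN pairs are $\mathtt{c}$-free and left fixed. The same argument disposes of $\mathtt{nn}$, T, 4 and K for the RN map. The RN pair is repaired as soon as $S$ also contains T, 4 or K, via $\overline{\mathtt{x}}\mathtt{lbb}\sim_{\mathrm{T}}\overline{\mathtt{x}}\mathtt{lb}\sim_{\mathrm{RN}}\overline{\mathtt{x}}\mathtt{rb}\sim_{\mathrm{T}}\overline{\mathtt{x}}\mathtt{rbb}$ (likewise with 4), or via $\overline{\mathtt{x}}\mathtt{lbb}\sim_{\mathrm{K}}\overline{\mathtt{x}}\mathtt{blb}\sim_{\mathrm{RN}}\overline{\mathtt{x}}\mathtt{brb}\sim_{\mathrm{K}}\overline{\mathtt{x}}\mathtt{rbb}$. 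For $S=\{\mathrm{RN}\}$ one must enlarge $\sim_{\mathrm{RN}}$, e.g.\ to $\overline{\mathtt{x}}\mathtt{lb}\mathtt{b}^k\sim\overline{\mathtt{x}}\mathtt{rb}\mathtt{b}^k$ for all $k\geq 0$.

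There is also a smaller slip. Your lemma that $\sigma_{\overline{\mathtt{x}}}(A)=\sigma_{\overline{\mathtt{y}}}(A)$ whenever $\overline{\mathtt{x}}\sim_S\overline{\mathtt{y}}$ holds only for nonempty positions. For instance, $\varepsilon\sim_{\mathtt{nn}}\mathtt{nn}$, but $\sigma_\varepsilon(p\to q)=\sigma_{\mathtt{c}}(p)\to\sigma_{\mathtt{c}}(q)$ whereas $\sigma_{\mathtt{nn}}(p\to q)=\sigma_{\mathtt{l}}(p)\to\sigma_{\mathtt{r}}(q)$. The nonempty case is all your T, 4 and K computations use.
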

\begin{proof}
    By induction on derivations. Appealing to Theorem \ref{theorem:BBox-Hyperformal} leaves only the four cases to consider:
    
    For T, suppose $\mathrm{T}\in S$. Note that $\sigma_{[\varepsilon]_S}(\Box A\to A)=\Box\sigma_{[\mathtt{bc}]_S}(A)\to\sigma_{[\mathtt{c}]_S}(A)$. Now observe that since $\mathtt{bc}\sim_{\mathrm{T}}\mathtt{c}$, $\sigma_{[\mathtt{bc}]_S}(A)=\sigma_{[\mathtt{c}]_S}(A)$. So in fact $\Box\sigma_{[\mathtt{bc}]_S}(A)\to\sigma_{[\mathtt{c}]_S}(A)$ is another instance of T. 

    For 4, suppose that $\mathrm{4}\in S$. We show that $\sigma_{[\varepsilon]_{S}}(\Box A\to\Box\Box A)$ is a theorem. Some manipulation reveals this to be equivalent to $\Box\sigma_{[\mathtt{bc}]_{S}}(A)\to\Box\Box\sigma_{[\mathtt{bbc}]_{S}}(A)$. Since $\mathtt{bc}\sim_{4}\mathtt{bbc}$, $\sigma_{[\mathtt{bc}]_{S}}(A)=\sigma_{[\mathtt{bbc}]_{S}}(A)$, showing that this is itself an instance of $4$.
    
    For K, suppose $\mathrm{K}\in S$. We compute that $\sigma_{\mathtt{[\varepsilon]}_S}(\Box(A\to B)\to(\Box A\to \Box B))$ is
    \begin{displaymath}
        \Box(\sigma_{\mathtt{[lbc]}_S}(A)\to\sigma_{\mathtt{[rbc]}_S}(B))\to(\Box\sigma_{\mathtt{[blc]}_S}(A)\to\Box\sigma_{\mathtt{[brc]}_S}(B)).
    \end{displaymath} 
    Now observe that since $\mathtt{lbc}\sim_{\mathrm{K}}\mathtt{blc}$, $\sigma_{[\mathtt{lbc}]_S}(A)=\sigma_{[\mathtt{blc}]_S}(A)$, and since $\mathtt{rbc}\sim_{\mathrm{K}}\mathtt{brc}$, $\sigma_{[\mathtt{rbc}]_S}(A)=\sigma_{[\mathtt{brc}]_S}(A)$. So the offset formula is again an instance of K.

    For RN, assume that $\mathrm{RN}\in S$ and that $\Box A$ was derived from an application of RN to an earlier theorem $A$. Fix a substitution $\sigma$ and define a further substitution $\sigma^{\star\star}$ so that

    \begin{center}
        $\sigma^{\star\star}_\mathtt{[\overline{x}]_S}(p)=
        \begin{cases}
            \sigma_{[\mathtt{\overline{y}lb}]_S}(p) & \mbox{ if }\mathtt{\overline{x}}=\mathtt{\overline{y}c}\\
            \sigma_{]\mathtt{\overline{x}b}]_S}(p) & \mbox{ otherwise}
        \end{cases}$
    \end{center}

    \noindent Let us review three facts about $\sigma^{\star\star}$: First, one can check that $\sigma^{\star\star}$ is faithful. Second, as before, an easy induction establishes that the two cases of the definition remain correct for any formula $A$. Third, $\sigma^{\star\star}$ respects each relation $\sim_{S}$, \emph{i.e.}, if $\mathtt{\overline{x}}\sim_{S}\mathtt{\overline{y}}$ then $\sigma^{\star\star}_{[\mathtt{\overline{x}}]}(A)=\sigma^{\star\star}_{[\mathtt{\overline{y}}]}(A)$. Consequently, by induction hypothesis, $\sigma^{\star\star}_{[\varepsilon]_{S}}(A)$ is a theorem. Now, we consider two cases based on the form of $A$. If $A$ is of the form $B\rightarrow C$, then $\sigma^{\star\star}_{[\varepsilon]_{S}}(B\rightarrow C)=\sigma^{\star\star}_{[\mathtt{c}]_{S}}(B)\rightarrow\sigma^{\star\star}_{[\mathtt{c}]_{S}}(C)$. This by selection of $\sigma^{\star\star}$ is $\sigma_{[\mathtt{lb}]_{S}}(B)\rightarrow\sigma_{[\mathtt{lb}]_{S}}(C)$ and as $\mathtt{lb}\sim_{RN}\mathtt{rb}$, $\sigma_{[\mathtt{lb}]_{S}}(B)\rightarrow\sigma_{[\mathtt{rb}]_{S}}(C)=\sigma_{[\mathtt{b}]_{S}}(B\rightarrow C)$ is provable as well. By an application of RN, then, $\Box(\sigma_{[\mathtt{b}]_{S}}(B\rightarrow C))$ is a theorem, \emph{i.e.}, $\sigma_{[\varepsilon]_{S}}(\Box(B\rightarrow C))$ is a theorem as well. In case $A$ is not of the form $B\to C$, then by construction, $\sigma^{\star\star}_{[\varepsilon]_{S}}(A)=\sigma_{[\mathtt{b}]_{S}}(A)$. By an application of RN, $\Box(\sigma_{[\mathtt{b}]_{S}}(A))$ will be a theorem, whence $\sigma_{[\varepsilon]_{S}}(\Box A)$ will be a theorem as well.

    
\end{proof} 

\noindent As before, we can apply features of hyperformalism to demonstrate refined variable sharing properties enjoyed by relevant modal logics:

\begin{definition}
A logic in the language $\mathbf{L}$ enjoys the $\positions{MPos}_S$-variable sharing property in case whenever $A\rightarrow B$ is provable, there are $A[p]$ in $A$ and $B[p]$ in $B$ such that $\mathsf{pos}(A[p]\rightarrow B)\sim_{\mathtt{nn}}\mathsf{pos}(A\rightarrow B[p])$ and $\mathsf{pos}(A[p]\rightarrow B)\sim_{S}\mathsf{pos}(A\rightarrow B[p])$.
\end{definition}

\noindent We can provide a schematic characterization of the variable sharing properties holding of these systems:

\begin{theorem}
   Each logic $\logic{B}_{S}^{\Box}$ has the $\positions{MPos}_{S}$-variable sharing property.
\end{theorem}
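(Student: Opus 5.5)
We follow the proof of Theorem~\ref{theorem:BBox-MPosVSP}, replacing the ordinary hyperformalism result by its $\positions{MPos}_S$ analogue. Suppose $A\rightarrow B$ is a theorem of $\logic{B}^\Box_S$.

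\textbf{Step 1: an injective atomic substitution.} Fix an injection $\mathsf{g}$ from $\positions{MPos}_S\times\At$ into the natural numbers. Define the atomic $\positions{MPos}_S$-substitution $\sigma$ by $\sigma_{[\overline{\mathtt{x}}]_S}(p)=p_{\mathsf{g}(\langle[\overline{\mathtt{x}}]_S,p\rangle)}$, where the $p_k$ are pairwise distinct atoms. This $\sigma$ is injective up to $\sim_S$: two occurrences receive the same atom iff they come from the same atom at $\sim_S$-equivalent positions.

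\textbf{Step 2: transfer to a substituted theorem.} By the preceding theorem, $\logic{B}^\Box_S$ is $\positions{MPos}_S$-hyperformal. Hence $\sigma_{[\varepsilon]_S}(A\rightarrow B)=\sigma_{[\mathtt{c}]_S}(A)\rightarrow\sigma_{[\mathtt{c}]_S}(B)$ is a theorem of $\logic{B}^\Box_S$.

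\textbf{Step 3: ordinary variable sharing for the extension.} We need the ordinary variable sharing property for $\logic{B}^\Box_S$ itself; Lemma~\ref{lemma:vanilla-vsp} covers only $\logic{B}^\Box$. We obtain it as the lemma did, by embedding in $\logic{RS5}$. For each $S\subseteq\{\mathrm{T},\mathrm{4},\mathrm{K},\mathrm{RN}\}$, the axioms T, 4, K and the rule RN are all derivable in $\logic{RS5}$, since $\logic{RS5}$ contains the relevant S5 modal principles. So $\logic{B}^\Box_S$ is a subsystem of $\logic{RS5}$, and Theorem~4.4 of \cite{standefervarietiesS5} gives variable sharing. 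Therefore $\sigma_{[\mathtt{c}]_S}(A)$ and $\sigma_{[\mathtt{c}]_S}(B)$ share some atom $p_k$.

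\textbf{Step 4: reading off positions.} An occurrence of $p_k$ in $\sigma_{[\mathtt{c}]_S}(A)$ arises from an occurrence $A[p]$. By the recursive clauses, its position class is $[\mathsf{pos}(A[p]\rightarrow B)]_S$, since the leading $\mathtt{c}$ records the passage through the main arrow. The same holds for an occurrence $B[q]$ in $B$. Injectivity of $\mathsf{g}$ then forces $p=q$ and $\mathsf{pos}(A[p]\rightarrow B)\sim_S\mathsf{pos}(A\rightarrow B[p])$. The required $\sim_{\mathtt{nn}}$ clause should be read, as in Theorem~\ref{theorem:BBox-MPosVSP}, as the $\mathtt{nn}$-part of $\sim_S$. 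This is because $\sim_S$ is generated by $\sim_{\mathtt{nn}}$ together with the $\sim_s$.

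\textbf{Main obstacle.} The main obstacle is the bookkeeping in Step~4. We must check by induction on formulas that the position attached to an atom occurrence by the recursive clauses of $\sigma$ is exactly its $\mathsf{pos}$ relative to the whole conditional, up to $\sim_S$. This has to hold uniformly across the $\mathtt{l}/\mathtt{r}$ versus $\mathtt{c}$ clauses. It also has to hold for the unusual relation $\sim_{\mathrm{RN}}$, which relates only positions ending in $\mathtt{lb}$ and $\mathtt{rb}$. I would prove a general lemma that $\sigma_{[\overline{\mathtt{x}}]_S}(C)$ places each atom occurrence $C[p]$ at the class of the concatenation of its relative path with $\overline{\mathtt{x}}$, in the order the clauses prepend. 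Secondarily, Step~3 must be confirmed for each $S$, in particular that K and RN hold in $\logic{RS5}$.
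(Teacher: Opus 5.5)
Your proposal is correct and is essentially the paper's own argument: the paper's proof is the one-line instruction to repeat the proof of Theorem~\ref{theorem:BBox-MPosVSP} with $\mathsf{g}$ modified to respect $\sim_S$, which is what your Steps 1, 2 and 4 do. You also make explicit two points the paper leaves tacit: ordinary variable sharing for $\logic{B}^\Box_S$ has to come from $\logic{B}^\Box_S\subseteq\logic{RS5}$ rather than from Lemma~\ref{lemma:vanilla-vsp} itself, and the $\sim_{\mathtt{nn}}$ conjunct in the definition can only be met in the $\sim_S$ sense (read literally it already fails for $\Box p\to p$ when $\mathrm{T}\in S$, where the shared $p$ sits at $\mathtt{bc}$ and $\mathtt{c}$).
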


\begin{proof}
    By analogous means to Theorem \ref{theorem:BBox-MPosVSP}, modifying the definition of $\mathsf{g}$ to require respect for $\sim_{S}$.
\end{proof}

\noindent Many such systems will have rather elegant and novel variable sharing properties, an example of which can be illustrated by the system $\logic{B}_{\lbrace 4\rbrace}^{\Box}$:

\begin{corollary}
    Say that an occurrence of an atom in a formula has \emph{extensional mode} if it is not within the scope of a $\Box$ and \emph{intensional mode} otherwise. Then if $A\rightarrow B$ is a theorem of $\logic{B}_{\lbrace 4\rbrace}^{\Box}$, then there is a variable occurring in both $A$ and $B$ with identical mode.
\end{corollary}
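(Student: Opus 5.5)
The plan is to derive the corollary from the $\positions{MPos}_{\{4\}}$-variable sharing property of $\logic{B}^\Box_{\{4\}}$ (the preceding theorem with $S=\{\mathrm{4}\}$). We then show that the equivalence $\sim_{\{4\}}$ preserves a simple invariant of positions, namely whether the position contains at least one $\mathtt{b}$.

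First, suppose $A\to B$ is a theorem of $\logic{B}^\Box_{\{4\}}$. By the preceding theorem there is an atom $p$ with occurrences $A[p]$ and $B[p]$ such that $\mathsf{pos}(A[p]\to B)\sim_{\{4\}}\mathsf{pos}(A\to B[p])$. Next, observe from the definition of $\mathsf{pos}$ that an occurrence lies within the scope of a $\Box$ exactly when its position contains the letter $\mathtt{b}$. Only the $\Box$ clause introduces $\mathtt{b}$. The clauses for $\to$, $\neg$, $\land$ and $\lor$ add only $\mathtt{c}$, $\mathtt{l}$, $\mathtt{r}$ or $\mathtt{n}$, or nothing. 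Moreover, the position of $A[p]$ in $A\to B$ is obtained from its position in $A$ by prefixing or suffixing $\mathtt{c}$ (or the analogous $\mathtt{l}$/$\mathtt{r}$ bookkeeping), so it contains $\mathtt{b}$ iff the occurrence is under a $\Box$ in $A$. The same holds for $B[p]$. So the mode of an occurrence is determined by the predicate ``contains $\mathtt{b}$'' on its position.

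The main step is to check that this predicate is invariant under $\sim_{\{4\}}$. Since $\sim_{\{4\}}$ is the smallest equivalence relation generated by $\sim_{\mathtt{nn}}\cup\sim_4$, it suffices to check the generators, because a predicate invariant under each generating step is invariant under the reflexive–symmetric–transitive closure. Adding or deleting $\mathtt{nn}$ does not change the number of $\mathtt{b}$'s. The relation $\mathtt{\overline{x}b\overline{y}}\sim_4\mathtt{\overline{x}bb\overline{y}}$ changes the number of $\mathtt{b}$'s by one, but both sides contain at least one $\mathtt{b}$. So a position with no $\mathtt{b}$ is $\sim_{\{4\}}$-related only to positions with no $\mathtt{b}$.

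This step is also the only real obstacle, and it is where $\mathrm{T}$ must be excluded. Under $\sim_{\mathrm{T}}$ a $\mathtt{b}$ could be deleted entirely, which would break the invariant. So the argument uses essentially that $S=\{4\}$ and not a larger set. Combining the three steps, the shared atom $p$ occurs in $A$ and in $B$ with positions that agree on containing $\mathtt{b}$, and hence with identical mode.
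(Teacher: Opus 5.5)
Your proposal is correct and matches the argument the paper leaves implicit: specialize the $\positions{MPos}_{\{4\}}$-variable sharing property, then observe that neither $\sim_{\mathtt{nn}}$ nor $\sim_{4}$ changes whether a position contains $\mathtt{b}$, which is exactly what mode records. One small overstatement in your closing remark: $\sim_{\mathrm{K}}$ and $\sim_{\mathrm{RN}}$ also preserve this invariant, so the argument needs only $\mathrm{T}\notin S$, not $S=\{\mathrm{4}\}$.
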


\noindent The inclusion of the axiom 4 eliminates the \emph{count} of instances of $\Box$ from  playing a role in determining a variable sharing property. (If translated into terms of aboutness, this means that including axiom 4 ensures what \cite{ferguson2023c} calls ``K-idempotence,'' \emph{i.e.}, that while the subject-matters of $A$ and $\Box A$ may differ, those of $\Box A$ and $\Box \Box A$ must coincide.) Axioms K or T, themselves, have consequences for the strength of the variable sharing properties holding of systems in which they are included.

\section{$\logic{K}^{\positions{MPos}}$}\label{section:KMPos}

$\logic{B}^\Box$ is a $\positions{MPos}$-hyperformal sublogic of the classical modal logic $\logic{K}$. We now turn to describing another $\positions{MPos}$-hyperformal sublogic of $\logic{K}$. We will call this logic $\logic{K}^{\positions{MPos}}$ and give three different descriptions of it. The first description is in terms of $\positions{MPos}$-varying Kripke models. The second is via a tableau system that we will show is both sound and complete for the varying Kripke models. The third description is a characterization of where $\logic{K}^{\positions{MPos}}$ sits among the $\positions{MPos}$-hyperformal subsets of $\logic{K}$. As we will see, it is the largest such set, which gives us good reason to study it further. Unfortunately, because of space constraints, that study will have to wait for another day.

\subsection{The Semantic Characterization}

Given a Kripke frame $K=\langle W,\rho\rangle$, a $\positions{MPos}^K$-assignment is a function $\alpha:\positions{MPos}\times W\times\At\to\{0, 1\}$. $\positions{MPos}^K$-assignments extend via the following recursive clauses:

\begin{itemize}
    \item $\alpha_{\overline{\mathtt{x}}}(\omega, A\land B)=\min(\alpha_{\overline{\mathtt{x}}}(\omega, A), \alpha_{\overline{\mathtt{x}}}(\omega, B))$;
    \item $\alpha_{\overline{\mathtt{x}}}(\omega, A\lor B)=\max(\alpha_{\overline{\mathtt{x}}}(\omega, A), \alpha_{\overline{\mathtt{x}}}(\omega, B))$;
    \item $\alpha_{\varepsilon}(\omega, A\to B)=\max(1-\alpha_{\mathtt{c}}(\omega, A), \alpha_{\mathtt{c}}(\omega, B))$;
    \item $\alpha_{\overline{\mathtt{x}}}(\omega, A\to B)=\max(1-\alpha_{\mathtt{l\overline{x}}}(\omega, A), \alpha_{\mathtt{r\overline{x}}}(\omega, B))$ for $\overline{x}\neq\varepsilon$;
    \item $\alpha_{\overline{\mathtt{x}}}(\omega, \neg A)=1-\alpha_{\mathtt{n}\overline{\mathtt{x}}}(\omega, A)$; and
    \item $\alpha_{\overline{\mathtt{x}}}(\omega, \Box A)=\inf\{\alpha_{\mathtt{b}\overline{\mathtt{x}}}(v, A):\langle \omega, v\rangle\in \rho\}$.
\end{itemize}

We say that $A$ is $\positions{MPos}$-valid (and we write $\vDash_{\positions{MPos}} A$) just if for all Kripke frames $K$, all $\positions{MPos}^K$-assignments $\alpha$, and all $\omega\in K$, $\alpha_{\varepsilon}(\omega, A)=1$. 

As a referee has helpfully pointed out, it is possible to see functions $\alpha:\positions{MPos}\times W\times\At\to\{0, 1\}$ as (somewhat) ordinary Kripke models with a funny set of worlds---in particular, with the set of worlds $\positions{MPos}\times W$ and with assignments that, as it were, `happen to' vary along one of the world-dimensions in regular ways. This sort of thing has been explored---albeit in a different setting---in \cite{logan2025content}, where some of the details were worked out. We think that this is a selling point for the logics at hand---a point that we will revisit in the conclusion. 

\subsection{The Tableaux Characterization}

\begin{figure}[tb]
    \begin{center}
        \begin{tabular}{|c|c|c|c|}\hline
            \xymatrix@C=-11mm@R=4mm{ 
                & \overline{\mathtt{x}},\omega,B_1\lor B_2,1\ar[dr]\ar[dl] & \\
                \overline{\mathtt{x}},\omega,B_1,1 & & \overline{\mathtt{x}},\omega,B_2,1
            }
            &
            \xymatrix@C=-10mm@R=4mm{
                \overline{\mathtt{x}},\omega,B_1\lor B_2,0\ar[d] \\
                \txt{$\overline{\mathtt{x}},\omega,B_1,0$ \\ $\overline{\mathtt{x}},\omega,B_2,0$} 
            }
            &
            \xymatrix@C=-10mm@R=4mm{
                \overline{\mathtt{x}},\omega,B_1\land B_2,1\ar[d] \\
                \txt{$\overline{\mathtt{x}},\omega,B_1,1$ \\ $\overline{\mathtt{x}},\omega,B_2,1$} 
            }
            &
            \xymatrix@C=-11mm@R=4mm{ 
                & \overline{\mathtt{x}},\omega,B_1\land B_2,0\ar[dr]\ar[dl] & \\
                \overline{\mathtt{x}},\omega,B_1,0 & & \overline{\mathtt{x}},\omega,B_2,0
            }
            \\ \hline
            \xymatrix@C=-11mm@R=4mm{
                & \varepsilon,\omega,B_1\to B_2\ar[dr]\ar[dl] & \\
                 \mathtt{c},\omega,B_1,0 & & \mathtt{c},\omega,B_2,1
            }
            &
            \xymatrix@C=-11mm@R=4mm{
                & \mathtt{\overline{y}},\omega,B_1\to B_2\ar[dr]\ar[dl] & \\
                 \mathtt{l\overline{y}},\omega,B_1,0 & & \mathtt{r\overline{y}},\omega,B_2,1
            }
            &
            \xymatrix@C=-10mm@R=4mm{
                \varepsilon,\omega,B_1\to B_2,0\ar[d] \\
                \txt{$ \mathtt{c},\omega,B_1,1$ \\ $ \mathtt{c},\omega,B_2,0$}
            }
            &
            \xymatrix@C=-10mm@R=4mm{
                \mathtt{\overline{y}},\omega,B_1\to B_2,0\ar[d] \\
                \txt{$ \mathtt{l\overline{y}},\omega,B_1,1$ \\ $\mathtt{r\overline{y}},\omega,B_2,0$}
            }
            \\ \hline
            \xymatrix@C=-10mm@R=4mm{
                \overline{\mathtt{x}},\omega,\neg B,1\ar[d] \\
                 \mathtt{n\overline{x}},\omega,B,0
            }
            &
            \xymatrix@C=-10mm@R=4mm{
                \overline{\mathtt{x}},\omega,\neg B,0\ar[d] \\
                 \mathtt{n\overline{x}},\omega,B,1
            }
            &

            \xymatrix@C=-10mm@R=4mm{
                \txt{$\omega_1 \R \omega_2$ \\ $\overline{\mathtt{x}}, \omega_1, \Box B, 1$}\ar[d]\\
		          \mathtt{b\overline{\mathtt{x}}}, \omega_2, B, 1
        	}
            &
            \xymatrix@C=-10mm@R=3mm{
		          \overline{\mathtt{x}}, \omega, \Box B, 0\ar[d] \\
                \txt{$\omega \R \nu$ \\ $\mathtt{b\overline{\mathtt{x}}}, \nu, B, 0$}
            }
            \\ \hline 
        \end{tabular}
    \end{center}
    \caption{Tableau rules; we assume $\overline{y}\neq\varepsilon$ and $\nu$ is fresh.}\label{tableaurules}
\end{figure}

$\positions{MPos}$-validity can be tracked by analytic tableaux that have nodes that are either quadruples $\langle\overline{\mathtt{x}}, \omega, A, i\rangle$ or $\R$-expressions of the form $\omega_i \R \omega_j$. In quadruples, the $\overline{x}$-index is drawn from $\positions{MPos}$ and tracks positions, the $\omega$-index is drawn from some set $\Omega=\{\omega_i\}_{i=1}^\infty$ used to track \textit{worlds}, $A$ is a formula and $i\in\{0,1\}$. $\R$-expressions, on the other hand, are used to track accessibility facts. 

Tableaux are governed by the rules in Figure~\ref{tableaurules}. A \emph{branch} on a tableau closes when it contains nodes of the form $\langle\mathtt{\overline{x}},\omega,A,1\rangle$ and $\langle\mathtt{\overline{y}},\omega,A,0\rangle$ for $\mathtt{\overline{x}}\sim\mathsf{\overline{y}}$. A \emph{tableau} closes when all of its branches close. We will write $\vdash_{\positions{MPos}} A$ to mean that there is a closed $S$-tableau that begins at $\langle\varepsilon, \omega_1, A, 0\rangle$.\footnote{One could, instead, add rules corresponding to the conditions the various logics impose on the relation $\R$ as is done in e.g. \cite{beall2017logic}. We've opted for the current approach to avoid painful, unilluminating discussions of what to do in our completeness proof when the rules governing $\R$ can be applied without bound.}

\subsubsection{Metatheory 1: Setup}

We will say that a \emph{branchset} is a set all of whose members are possible nodes on a tableau. So if $\beta$ is a branchset, then each member of $\beta$ either has the form $\langle\overline{\mathtt{x}},\omega,A,i\rangle$ with $\overline{\mathtt{x}}\in\positions{MPos}$, $\omega\in\Omega$, $A$ a formula, and $i\in\{0,1\}$ or has the form $\omega_i\R \omega_j$ with $\{\omega_i,\omega_j\}\subseteq\Omega$. Given a branchset $\beta$, let $\Omega_\beta$ be the subset of $\Omega$ containing all $\omega_i$ that occur in some member of $\beta$. A \emph{correspondence} is a triple consisting of a Kripke frame $K=\langle W,\rho\rangle$, a function $f:\Omega_\beta\to W$, and an $\positions{MPos}^{K}$ assignment $\alpha$. Given a correspondence $\kappa=\langle K,f,\alpha\rangle$ with $K=\langle W_K,\rho_K\rangle$, $\kappa$ is said to \emph{conform to} $\beta$ when $\alpha_{\overline{\mathtt{x}}}(f(\omega),A)=i$ if $\langle\overline{\mathtt{x}},\omega,A,i\rangle\in\beta$, and $\langle f(\omega_i),f(\omega_j)\rangle\in\rho_K$ if $\omega_i\R \omega_j\in\beta$. 

Given a branchset $\beta$, we define the $\langle\overline{\mathtt{x}},\omega,A,i\rangle$-extensions of $\beta$ by cases as follows. First, $\beta$ is the only $\langle\overline{\mathtt{x}},\omega,A,i\rangle$-extension of $\beta$ if $\langle\overline{\mathtt{x}},\omega,A,i\rangle\not\in\beta$. Otherwise, 
\begin{itemize}
    \item if $A$ is an atom, then $\beta$ is the only $\langle\overline{\mathtt{x}},\omega,A,i\rangle$-extension of $\beta$.
    \item Both $\beta\cup\{\langle\overline{\mathtt{x}},\omega,B_1,1\rangle\}$ and $\beta\cup\{\langle\overline{\mathtt{x}},\omega,B_2,1\rangle\}$ are $\langle\overline{\mathtt{x}},\omega,B_1\lor B_2,1\rangle$-extensions of $\beta$.
    \item $\beta\cup\{\langle\overline{\mathtt{x}},\omega,B_1,0\rangle,\langle\overline{\mathtt{x}},\omega,B_2,0\rangle\}$ is the only $\langle\overline{\mathtt{x}},\omega,B_1\lor B_2,0\rangle$-extension of $\beta$.
    \item $\beta\cup\{\langle\overline{\mathtt{x}},\omega,B_1,1\rangle,\langle\overline{\mathtt{x}},\omega,B_2,1\rangle\}$ is the only $\langle\overline{\mathtt{x}},\omega,B_1\land B_2,1\rangle$-extension of $\beta$.
    \item Both $\beta\cup\{\langle\overline{\mathtt{x}},\omega,B_1,0\rangle\}$ and $\beta\cup\{\langle\overline{\mathtt{x}},\omega,B_2,0\rangle\}$ are $\langle\overline{\mathtt{x}},\omega,B_1\land B_2,0\rangle$-extensions of $\beta$.
    \item Both $\beta\cup\{\langle\mathtt{c},\omega,B_1,0\rangle\}$ and $\beta\cup\{\langle\mathtt{c},\omega,B_2,1\rangle\}$ are $\langle\varepsilon,\omega,B_1\to B_2,1\rangle$-extensions of $\beta$.
    \item $\beta\cup\{\langle\mathtt{c},\omega,B_1,1\rangle,\langle\mathtt{c},\omega,B_2,0\rangle\}$ is the only $\langle\overline{\mathtt{x}},\omega,B_1\to B_2,0\rangle$-extension of $\beta$.
    \item For $\overline{\mathtt{x}}\neq\varepsilon$, both $\beta\cup\{\langle\mathtt{l\overline{x}},\omega,B_1,0\rangle\}$ and $\beta\cup\{\langle\mathtt{r\overline{x}},\omega,B_2,1\rangle\}$ are $\langle\varepsilon,\omega,B_1\to B_2,1\rangle$-extensions of $\beta$.
    \item For $\overline{\mathtt{x}}\neq\varepsilon$, $\beta\cup\{\langle\mathtt{l\overline{x}},\omega,B_1,1\rangle,\langle\mathtt{r\overline{x}},\omega,B_2,0\rangle\}$ is the only $\langle\overline{\mathtt{x}},\omega,B_1\to B_2,0\rangle$-extension of $\beta$.
    \item $\beta\cup\{\langle\mathtt{n\overline{x}},\omega,B,0\rangle\}$ is the only $\langle\overline{\mathtt{x}},\omega,\neg B,1\rangle$-extension of $\beta$.
    \item $\beta\cup\{\langle\mathtt{n\overline{x}},\omega,B,1\rangle\}$ is the only $\langle\overline{\mathtt{x}},\omega,\neg B,0\rangle$-extension of $\beta$.
    \item $\beta\cup\{\langle\mathtt{b\overline{x}},\omega',B,1\rangle: \omega\R \omega'\in\beta\}$ is the only $\langle\overline{\mathtt{x}},\omega,\Box B,1\rangle$-extension of $\beta$.
    \item For each $\omega'\in\Omega-\Omega_\beta$,  $\beta\cup\{\langle\mathtt{b\overline{x}},\omega',B,0\rangle, \omega\R\omega'\}$ is an $\langle\overline{\mathtt{x}},\omega,\Box B,0\rangle$-extension of $\beta$
\end{itemize}\smallskip

\subsubsection{Metatheory 2: Soundness}

\begin{lemma}
    Let $\beta$ be a branchset, $\kappa=\langle K,f,\alpha\rangle$ be a correspondence, $K=\langle W,\rho\rangle$ and $\langle\mathtt{\overline{x}},\omega,A,i\rangle\in\beta$. Then if $\kappa$ conforms to $\beta$ then for some $\langle\mathtt{\overline{x}},\omega,A,i\rangle$-extension $\beta'$ of $\beta$, there is $g:\Omega_{\beta'}\to W$ such that $g(\omega)=f(\omega)$ for all $\omega\in\Omega_\beta$ and $\kappa_g=\langle K,g,\alpha\rangle$ conforms to $\beta'$.
\end{lemma}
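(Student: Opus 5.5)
The plan is a case analysis on the shape of $A$ and on the sign $i$, following the list of clauses defining the $\langle\overline{\mathtt{x}},\omega,A,i\rangle$-extensions. In every case except the $\Box B,0$ case, I will take $g=f$. The reason is that every such extension adds nodes only about worlds already in $\Omega_\beta$, so $\Omega_{\beta'}=\Omega_\beta$. The $\R$-expressions in $\beta'$ are then exactly those in $\beta$, and conformity for them is inherited from $\kappa$. So the only thing to check is that the new quadruples receive the right values under $\alpha$ at $f(\omega)$. Throughout, the hypothesis $\langle\overline{\mathtt{x}},\omega,A,i\rangle\in\beta$ together with conformity gives $\alpha_{\overline{\mathtt{x}}}(f(\omega),A)=i$. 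If $A$ is atomic, $\beta'=\beta$ and there is nothing to do.

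For the truth-functional cases I will unpack the recursive clause for $\alpha$.
\begin{itemize}
\item If $\alpha_{\overline{\mathtt{x}}}(f(\omega),B_1\lor B_2)=1$, the max is $1$, so one disjunct gets value $1$ at position $\overline{\mathtt{x}}$; I choose the corresponding extension.
\item If $\alpha_{\overline{\mathtt{x}}}(f(\omega),B_1\lor B_2)=0$, both disjuncts get $0$. Conjunction is dual.
\item For negation, $\alpha_{\overline{\mathtt{x}}}(f(\omega),\neg B)=1-\alpha_{\mathtt{n}\overline{\mathtt{x}}}(f(\omega),B)$, which gives exactly the required value at position $\mathtt{n}\overline{\mathtt{x}}$.
\item For $\to$, I split on whether $\overline{\mathtt{x}}=\varepsilon$. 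If so, the clause $\max(1-\alpha_{\mathtt{c}}(\cdot,B_1),\alpha_{\mathtt{c}}(\cdot,B_2))$ yields, for $i=1$, either $B_1$ false at $\mathtt{c}$ or $B_2$ true at $\mathtt{c}$, and for $i=0$ both nodes of the unique extension. If $\overline{\mathtt{x}}\neq\varepsilon$, the same argument works with positions $\mathtt{l}\overline{\mathtt{x}}$ and $\mathtt{r}\overline{\mathtt{x}}$.
\end{itemize}
(I read the listed clauses for $\to$ with the evident indexing, matching Figure~\ref{tableaurules}.)

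For $\Box B$ with $i=1$, again take $g=f$. Each new node is $\langle\mathtt{b}\overline{\mathtt{x}},\omega',B,1\rangle$ with $\omega\R\omega'\in\beta$, and conformity gives $\langle f(\omega),f(\omega')\rangle\in\rho$. Since the infimum defining $\alpha_{\overline{\mathtt{x}}}(f(\omega),\Box B)$ equals $1$, every $\rho$-successor $v$ of $f(\omega)$ has $\alpha_{\mathtt{b}\overline{\mathtt{x}}}(v,B)=1$, in particular $v=f(\omega')$.

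The only case needing care, and the main point of the lemma, is $\Box B$ with $i=0$. Here the infimum over $\{\alpha_{\mathtt{b}\overline{\mathtt{x}}}(v,B):\langle f(\omega),v\rangle\in\rho\}$ is $0$. Since the values lie in $\{0,1\}$ and the infimum of the empty set is $1$, there is some $v\in W$ with $\langle f(\omega),v\rangle\in\rho$ and $\alpha_{\mathtt{b}\overline{\mathtt{x}}}(v,B)=0$. Fix such a $v$ and any fresh $\omega'\in\Omega-\Omega_\beta$, which exists because $\beta$ can mention only finitely many (or at least not all) labels; I would make this an explicit assumption if needed. Take $\beta'=\beta\cup\{\langle\mathtt{b}\overline{\mathtt{x}},\omega',B,0\rangle,\omega\R\omega'\}$, and define $g$ to agree with $f$ on $\Omega_\beta$ with $g(\omega')=v$.

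Then $\kappa_g$ conforms to $\beta'$. Old nodes are unaffected because $g$ extends $f$. The new $\R$-expression holds because $\langle g(\omega),g(\omega')\rangle=\langle f(\omega),v\rangle\in\rho$. The new quadruple holds by the choice of $v$. The freshness of $\omega'$ is exactly what guarantees that setting $g(\omega')=v$ creates no clash with earlier constraints on $\omega'$.
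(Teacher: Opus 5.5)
Your proposal is correct and follows the paper's proof. It uses the same case split on the extension clauses, with $g=f$ everywhere except the $\Box B$, $i=0$ case, where you pick a $\rho$-successor $v$ of $f(\omega)$ with $\alpha_{\mathtt{b}\overline{\mathtt{x}}}(v,B)=0$ and send a fresh label $\omega'$ to it. You are slightly more careful than the paper: you spell out the truth-functional cases it delegates to a citation, and you make explicit both the non-emptiness of the successor set (since $\inf\emptyset=1$) and the need for $\Omega-\Omega_\beta\neq\emptyset$, which the paper secures only through its implicit restriction to finite branchsets.
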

\begin{proof}
    By examining cases. If $A$ is an atom, the result is trivial. For the non-modal connectives, the arguments go as in \cite{standefer2024topics}. For the modal cases, we argue as follows.

    Suppose $A=\Box B$ and $i=1$. If for no $\omega'$ is $\omega\R\omega'\in\beta$, then the result is trivial. So suppose $\omega\R\omega'\in\beta$. Then since $\kappa$ conforms to $\beta$, $\alpha_{\mathtt{\overline{x}}}(f(\omega),\Box B)=\inf\{\alpha_{\mathtt{b}\overline{\mathtt{x}}}(v, B):\langle f(\omega), v\rangle\in \rho\}=1$ and  $\langle f(\omega),f(\omega')\rangle\in\rho$. So $\alpha_{\mathtt{b\overline{x}}}(f(\omega'),B)=1$, and thus $\kappa$ itself conforms to $\beta\cup\{\langle\mathtt{b\overline{x}},\omega',B,1\rangle: \omega\R \omega'\in\beta\}$.

    Suppose $A=\Box B$ and $i=0$. Since $\kappa$ conforms to $\beta$, $\alpha_{\mathtt{\overline{x}}}(f(\omega),\Box B)=\inf\{\alpha_{\mathtt{b}\overline{\mathtt{x}}}(v, B):\langle f(\omega), v\rangle\in \rho\}=0$. So for some $v\in W$, $\alpha_{\mathtt{b}\overline{\mathtt{x}}}(v, B)=0$ and $\langle f(\omega),v\rangle\in\rho$. Choose $\omega'\in\Omega-\Omega_\beta$ and let $g(\omega)=f(\omega)$ for all $\omega\in\Omega_\beta$ and $g(\omega')=v$. Then clearly $\kappa_g$ conforms to $\beta\cup\{\langle\mathtt{b\overline{x}},\omega',B,0\rangle, \omega\R\omega'\}$.
\end{proof}

\begin{theorem}
    If $\vdash_{\positions{MPos}} A$, then $\vDash_{\positions{MPos}} A$.
\end{theorem}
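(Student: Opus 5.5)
We argue by contraposition, following the standard soundness pattern for tableaux (as in \cite{standefer2024topics}) and using the preceding lemma as the engine. Suppose $\not\vDash_{\positions{MPos}} A$. Then there are a Kripke frame $K=\langle W,\rho\rangle$, an $\positions{MPos}^K$-assignment $\alpha$, and a world $w\in W$ with $\alpha_{\varepsilon}(w,A)=0$. Set $f(\omega_1)=w$. The correspondence $\kappa=\langle K,f,\alpha\rangle$ then conforms to the initial branchset $\beta_0=\{\langle\varepsilon,\omega_1,A,0\rangle\}$. We aim to show that every tableau beginning at $\langle\varepsilon,\omega_1,A,0\rangle$ has an open branch, so $\not\vdash_{\positions{MPos}} A$.

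The main step is an induction on the construction of a tableau (equivalently, on the number of rule applications). The invariant is: some branch $\beta$ of the tableau has a correspondence $\langle K,g,\alpha\rangle$ conforming to $\beta$, with $g$ extending $f$.

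For the inductive step, consider the next rule application. If it is on a branch other than the designated one, nothing changes. If it is on the designated branch $\beta$ and applied to a node $\langle\overline{\mathtt{x}},\omega,B,i\rangle$, the branches it produces are exactly the $\langle\overline{\mathtt{x}},\omega,B,i\rangle$-extensions of $\beta$, or supersets of them in a form the lemma covers. The preceding lemma then supplies an extension $\beta'$ and a map $g'$ agreeing with $g$ on $\Omega_\beta$ such that $\langle K,g',\alpha\rangle$ conforms to $\beta'$.

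Two points need some care here.
\begin{enumerate}
\item The $\Box$-true rule in Figure~\ref{tableaurules} is applied relative to a single $\R$-expression $\omega_1\R\omega_2$ rather than to all of them at once. Its output is a subset of the full extension, and conformity is inherited by subsets, so the lemma still applies.
\item The fresh world $\nu$ in the $\Box$-false rule corresponds to the choice of $\omega'\in\Omega-\Omega_\beta$ in the lemma. The lemma allows any such $\omega'$, so we may take it to be exactly the fresh label the tableau uses.
\end{enumerate}
Since $\alpha$ never changes, all conformity facts accumulated earlier persist.

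Finally, a conforming branch cannot close. Suppose it contained $\langle\overline{\mathtt{x}},\omega,B,1\rangle$ and $\langle\overline{\mathtt{y}},\omega,B,0\rangle$ with $\overline{\mathtt{x}}\sim\overline{\mathtt{y}}$. Conformity would give $\alpha_{\overline{\mathtt{x}}}(g(\omega),B)=1$ and $\alpha_{\overline{\mathtt{y}}}(g(\omega),B)=0$. So we need $\alpha_{\overline{\mathtt{x}}}(v,B)=\alpha_{\overline{\mathtt{y}}}(v,B)$ whenever $\overline{\mathtt{x}}\sim\overline{\mathtt{y}}$.

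This is the step we expect to be the main obstacle. The closure relation $\sim$ is presumably $\sim_{\mathtt{nn}}$. For the required equality to hold, $\alpha$ must be faithful, i.e.\ $\alpha_{\overline{\mathtt{x}}\mathtt{nn}\overline{\mathtt{y}}}(v,p)=\alpha_{\overline{\mathtt{xy}}}(v,p)$, and this must then be lifted to arbitrary formulas by an induction on $B$ using the recursive clauses. The delicate points are prefixing by $\mathtt{l},\mathtt{r},\mathtt{n},\mathtt{b}$ and the special $\varepsilon$ versus $\mathtt{c}$ clause for $\to$.

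If the semantics does not impose faithfulness, the first thing to try is a reduction. Given any countermodel $\alpha$, define a faithful $\alpha'$ by reading off $\alpha$ at the $\mathtt{nn}$-reduced normal form of each position. Then check that $\alpha'_{\varepsilon}(w,A)=\alpha_{\varepsilon}(w,A)$. This holds because the positions actually visited in evaluating $A$ from $\varepsilon$ are only those along $A$'s syntax tree, and if needed we can choose $\alpha'$ to agree with $\alpha$ on those positions. With that in hand, the surviving branch is open, so no closed tableau for $A$ exists, contradicting $\vdash_{\positions{MPos}} A$.
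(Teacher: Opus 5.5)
Your skeleton matches the paper's proof. A countermodel at $w$ conforms to $\{\langle\varepsilon,\omega_1,A,0\rangle\}$, the preceding lemma keeps some branch conformed to, and a conformed branch cannot close. Your remarks on the single-$\R$ form of the $\Box$-true rule and on taking $\omega'$ to be the tableau's fresh label correctly fill in details the paper skips. The paper discharges the last step by assuming the closing nodes $\langle\overline{\mathtt{x}},\omega_j,B,1\rangle$, $\langle\overline{\mathtt{x}},\omega_j,B,0\rangle$ sit at the \emph{same} position. That is, it tacitly reads $\sim$ as identity, and on that reading your proof is complete.

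You instead take $\sim$ to be $\sim_{\mathtt{nn}}$, and then your closure step has a genuine gap. The equation you say you need, $\alpha_{\overline{\mathtt{x}}}(v,B)=\alpha_{\overline{\mathtt{y}}}(v,B)$ for all $\overline{\mathtt{x}}\sim_{\mathtt{nn}}\overline{\mathtt{y}}$, is false even for faithful $\alpha$. For example, $\alpha_{\varepsilon}(v,p\to p)=1$ always, while $\alpha_{\mathtt{nn}}(v,p\to p)=\max(1-\alpha_{\mathtt{l}}(v,p),\alpha_{\mathtt{r}}(v,p))$ can be $0$. So your induction on $B$ breaks at exactly the $\varepsilon$-clause you flagged and left open. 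What the induction does prove is the equation for \emph{nonempty} related positions: prefixing a letter preserves $\sim_{\mathtt{nn}}$, and nonempty positions never trigger the $\varepsilon$-clause. The missing idea is that this suffices. The positions $\sim_{\mathtt{nn}}$-related to $\varepsilon$ are exactly the $\mathtt{n}^{2k}$. On a tableau rooted at $\langle\varepsilon,\omega_1,A,0\rangle$, nodes at such positions arise only via the $\land$, $\lor$, $\neg$ rules, so they all lie at $\omega_1$ with sign $0$ and never form a closing pair.

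Your fallback for a semantics without faithfulness is not just unproved; it is false. Take $A=\neg\neg p\to p$. The tableau produces $\langle\mathtt{c},\omega_1,p,0\rangle$ and $\langle\mathtt{nnc},\omega_1,p,1\rangle$, which close under $\sim_{\mathtt{nn}}$. Yet $\alpha_{\varepsilon}(w,A)=\max(1-\alpha_{\mathtt{nnc}}(w,p),\alpha_{\mathtt{c}}(w,p))=0$ when $\alpha_{\mathtt{nnc}}(w,p)=1$ and $\alpha_{\mathtt{c}}(w,p)=0$. Both positions lie on $A$'s syntax tree, so no faithful $\alpha'$ can agree with $\alpha$ there; indeed every faithful $\alpha'$ validates $A$. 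Hence, with $\sim_{\mathtt{nn}}$-closure, the theorem holds only if faithfulness, $\alpha_{\overline{\mathtt{x}}\mathtt{nn}\overline{\mathtt{y}}}(v,p)=\alpha_{\overline{\mathtt{xy}}}(v,p)$, is built into the definition of an $\positions{MPos}^K$-assignment. It cannot be recovered by a reduction. The paper needs this assumption anyway: $\neg\neg p\to p\in\positions{MPos}(\logic{K})$, so the later equivalence with $\logic{K}^{\positions{MPos}}$ requires faithful assignments.
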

\begin{proof}
    Let $\vdash_{\positions{MPos}} A$. Suppose $K=\langle W,\rho\rangle$ is a Kripke frame and $\alpha$ is an $\positions{MPos}^K$-assignment. If $\alpha_\varepsilon(w,A)=0$, then letting $f(\omega_1)=w$, the triple $\langle K,f,\alpha\rangle$ conforms to $\langle\varepsilon\omega_1,A,0\rangle$. So by the previous lemma, it conforms to all of its extensions. But since $\vdash_{\positions{MPos}} A$, all such extensions eventually contain nodes of the form $\langle\overline{x},\omega_j,A,1\rangle$ and $\langle\overline{x},\omega_j,A,0\rangle$ and thus cannot be conformed to. So for no $K$ and $\alpha$ do we have $\alpha_\varepsilon(w,A)=0$. So $\vDash_{\positions{MPos}_S} A$. 
\end{proof}

\subsubsection{Metatheory 3: Completeness}

We have, to this point, been implicitly restricting our attention to finite branchsets. But (medicine and forestry aside) there's no good reason to stick to this restriction.\footnote{From here, what we say follows the contours of the argument in \cite{FittingMendelsohn}.} Given a finite branchset $\beta$, say that a $\beta$-tableau is a tableau that begins with the members of $\beta$. Given \emph{any} (finite or infinite) branchset $\beta$, say that $\beta$ is \emph{tableau-inconsistent} when for some finite $\beta'\subseteq\beta$, there is a closed $\beta'$-tableau. A branchset is \emph{tableau-consistent} if it's not tableau inconsistent. A branchset is \emph{maximally} tableau-consistent when it is tableau-consistent, but all proper tableau extensions of it are tableau inconsistent. A branchset is \emph{witnessed} when it contains $\langle\mathtt{\overline{x}},\omega,\Box B,0\rangle$ only if for some $\omega'$ it contains both $\omega\R\omega'$ and $\langle\mathtt{b\overline{x}},\omega', B,0\rangle$.

\begin{lemma}
    If $\beta$ is tableau-consistent and $\Omega_\beta$ is finite, then there is a maximal, witnessed, tableau-consistent extension $\gamma$ of $\beta$.
\end{lemma}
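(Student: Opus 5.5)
We build $\gamma$ by a Lindenbaum-style construction, following the witnessed-extension argument of \cite{FittingMendelsohn}. Since $\Omega_\beta$ is finite and $\Omega$ is countably infinite, infinitely many world-labels stay fresh for witnessing. The set of possible nodes (quadruples $\langle\mathtt{\overline{x}},\omega,A,i\rangle$ and $\R$-expressions) is countable, so fix an enumeration $N_1,N_2,\ldots$ of them. Put $\gamma_0=\beta$. At stage $n$, if $\gamma_n\cup\{N_n\}$ is tableau-inconsistent, set $\gamma_{n+1}=\gamma_n$. Otherwise, if $N_n$ is not of the form $\langle\mathtt{\overline{x}},\omega,\Box B,0\rangle$, set $\gamma_{n+1}=\gamma_n\cup\{N_n\}$. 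If it is of that form, choose $\omega'\in\Omega$ not occurring in $\gamma_n\cup\{N_n\}$ and set
\[
\gamma_{n+1}=\gamma_n\cup\{N_n,\ \omega\R\omega',\ \langle\mathtt{b\overline{x}},\omega',B,0\rangle\}.
\]
A fresh $\omega'$ always exists because, by induction, each $\gamma_n$ adds only finitely many labels to $\Omega_\beta$. Finally let $\gamma=\bigcup_n\gamma_n$.

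Three things then need checking. First, each $\gamma_n$ is tableau-consistent. The only nontrivial case is the witnessing step, where we need the following claim: if $\delta\cup\{\langle\mathtt{\overline{x}},\omega,\Box B,0\rangle\}$ is tableau-consistent and $\omega'$ is fresh, then $\delta\cup\{\langle\mathtt{\overline{x}},\omega,\Box B,0\rangle,\omega\R\omega',\langle\mathtt{b\overline{x}},\omega',B,0\rangle\}$ is tableau-consistent. We prove the contrapositive. Suppose a finite subset of the extended set has a closed tableau. Then a closed tableau exists for the corresponding finite subset of $\delta$ together with $\langle\mathtt{\overline{x}},\omega,\Box B,0\rangle$: its first step applies the $\Box$-$0$ rule to that node, choosing the same fresh $\nu=\omega'$ (legitimate because $\omega'$ occurs nowhere in this finite set), and then continues with the given closed tableau. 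When the given tableau uses some other label that clashes, we rename labels. This requires a routine lemma that closed tableaux are preserved under injective renaming of world-labels not occurring in the starting set, since rule applications and the closure condition depend only on label identity.

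Second, $\gamma$ is tableau-consistent, because tableau-inconsistency is witnessed by a finite subset, which lies inside some $\gamma_n$. Third, $\gamma$ is witnessed, since every $\Box$-$0$ node in $\gamma$ entered at its own enumeration stage together with its witness.

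Maximality is the step requiring the most care. Suppose $\gamma\cup\{N\}$ is tableau-consistent for some $N=N_n$. Then $\gamma_n\cup\{N_n\}\subseteq\gamma\cup\{N\}$ is consistent, so $N_n$ was added at stage $n$ and $N\in\gamma$. Hence no proper extension of $\gamma$ is consistent. If ``proper tableau extension'' is meant in the sense of the $\langle\cdot\rangle$-extensions defined earlier, we also show that $\gamma$ is closed under them. This follows because adding the node(s) of one of those extensions preserves consistency: for a non-branching rule, the rule can be applied at the start of any tableau. For a branching rule, if both alternatives were inconsistent, the two closed tableaux combine under a single rule application into a closed tableau for a finite subset of $\gamma$. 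I expect the main obstacle to be the renaming and freshness bookkeeping in the witnessing step, since the fresh label must avoid the labels of the finite subset actually used in the closed tableau, not just those of $\gamma_n$.
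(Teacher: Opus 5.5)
Your proposal is correct and follows essentially the same route as the paper: a Lindenbaum enumeration that, whenever a consistent $\Box$-$0$ node $\langle\mathtt{\overline{x}},\omega,\Box B,0\rangle$ comes up, adds a fresh witness $\omega\R\omega'$ and $\langle\mathtt{b\overline{x}},\omega',B,0\rangle$. It then makes the same three checks: stagewise consistency by prefixing a $\Box$-$0$ rule application, maximality from the enumeration, and witnessedness by construction. Your version is in fact more careful than the paper's, which writes the witness at $\mathtt{\overline{x}}$ instead of $\mathtt{b\overline{x}}$, requires freshness only relative to $\gamma_i$ rather than $\gamma_i\cup\{Z_i\}$, and skips the relabelling needed when the given closed tableau itself introduces $\omega$ or $\omega'$ as fresh labels.
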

\begin{proof}

    Enumerate the set of possible branchset entries as $Z_1,Z_2,\dots$ Note that some of these are quadruples and some are $\R$-claims. Define a sequence of sets $\gamma_1,\gamma_2,\dots$ as follows.

    \begin{itemize}
        \item $\gamma_1=\beta$
        \item If $\gamma_i\cup\{Z_i\}$ is tableau-inconsistent, then $\gamma_{i+1}=\gamma_i$.
        \item Otherwise, if $Z_i$ is a quadruple $\langle\mathtt{\overline{x}},\omega_j,\Box B,0\rangle$, then $\gamma_{i+1}=\gamma_i\cup\{Z_i,\langle\mathtt{\overline{x}},\omega_k, B,0\rangle,\omega_j\R\omega_k\}$ where $k$ is the least $n$ such that $\omega_n$ does not occur in $\gamma_i$.
        \item Otherwise $\gamma_{i+1}=\gamma_i\cup\{Z_i\}$.
    \end{itemize}
    Let $\gamma=\cup_{i=1}^\infty\gamma_i$. I claim that $\gamma$ is a maximal, witnessed, tableau-consistent extension of $\beta$. Since it's obvious that $\gamma$ is an extension of $\beta$, there are three things for us to check:
    \begin{description}[leftmargin=*]
        \item[$\gamma$ is maximal:] Pick a possible branchset entry $E$. Then $E=Z_i$ for some $i$. By construction, either $E\in\gamma_{n+1}\subseteq\gamma$ or $\gamma_n\cup\{E\}$ is tableau-inconsistent. But in the latter case, $\gamma\cup\{E\}$ is obviously tableau-inconsistent as well. So if $E\not\in\gamma$, then $\gamma_n\cup\{E\}$ is tableau-inconsistent, and thus $\gamma$ is maximal.
        \item[$\gamma$ is tableau-consistent:] We prove that each $\gamma_n$ is tableau-consistent. It follows from this that $\gamma$ is tableau-consistent, but we will also want to know this below.

        The proof is by contradiction. Suppose for some $n$ that $\gamma_n$ is tableau-inconsistent and let $\gamma_k$ be the least such $n$. Since $\beta$ is tableau-consistent, $k\neq 1$, so $\gamma_{k-1}$ and $Z_{k-1}$ are defined. We consider three cases:
        \begin{itemize}
            \item If $\gamma_{k-1}\cup\{Z_{k-1}\}$ is tableau-inconsistent, then $\gamma_{k-1}=\gamma_k$. But then since $\gamma_k$ is tableau-inconsistent, so is $\gamma_{k-1}$, contradicting $k$'s minimality. So $\gamma_{k-1}\cup\{Z_{k-1}\}$ is tableau-consistent.
            \item If $Z_{k-1}$ is $\langle\mathtt{\overline{x}},\omega_j,\Box B,0\rangle$, then $\gamma_{k}=\gamma_{k-1}\cup\{Z_{k-1},\langle\mathtt{\overline{x}},\omega_l, B,0\rangle,\omega_j\R\omega_k\}$ where $l$ is the least $m$ such that $\omega_m$ does not occur in $\gamma_{k-1}$. 

            Since $\gamma_k$ is tableau-inconsistent, there is a finite subset $F$ of $\gamma_{k-1}$ and a closed tableau that starts at $F\cup\{Z_{k-1},\langle\mathtt{\overline{x}},\omega_l, B,0\rangle,\omega_j\R\omega_k\}$. But then since $\omega_j$ is fresh to $F\cup\{Z_{k-1}\}$, the very same tableau is a closed tableau that starts at $F\cup\{Z_{k-1}\}$. But this contradicts the fact that $\gamma_{k-1}\cup\{Z_{k-1}\}$ is tableau-consistent. 
            \item Otherwise $\gamma_{k}=\gamma_{k-1}\cup\{Z_{k-1}\}$ and $\gamma_{k-1}\cup\{Z_{k-1}\}$ is consistent, which is a contradiction.
        \end{itemize}
        Thus, for no $n$ is $\gamma_n$ tableau-inconsistent. So $\gamma$ is tableau-consistent.
        \item[$\gamma$ is witnessed:] Suppose $\gamma$ contains $\langle\mathtt{\overline{x}},\omega,\Box B,0\rangle$. For some $n$, $\langle\mathtt{\overline{x}},\omega,\Box B,0\rangle=Z_n$. Since $\gamma_n$ is consistent, there is an $\omega'$ so that $\gamma_{n+1}$---and thus $\gamma$---contains contains both $\omega\R\omega'$ and $\langle\mathtt{b\overline{x}},\omega', B,0\rangle$.
    \end{description}
\end{proof}

Given a maximal, witnessed, tableau-consistent branchset $\beta$, the canonical $\beta$-correspondence $\kappa_\beta$ is the correspondence $\langle K,\iota,\alpha\rangle$ where $K=\langle\Omega_\beta,\{\langle\omega_i,\omega_j\rangle:\omega_1\R\omega_j\in\beta\}\rangle$, $\iota$ is the identity function, and $\alpha_{\mathtt{\overline{x}}}(\omega,p)=1$ iff $\langle\mathtt{\overline{x}},\omega,p,1\rangle\in\beta$.

\begin{lemma}
    Let $\beta$ be a maximal, witnessed, tableau-consistent branchset and let $\kappa_\beta=\langle K,\iota,\alpha\rangle$. If $\langle\mathtt{\overline{x}},\omega,A,1\rangle\in\beta$ then $\alpha_{\mathtt{\overline{x}}}(\omega,A)=1$ and if $\langle\mathtt{\overline{x}},\omega,A,0\rangle\in\beta$ then $\alpha_{\mathtt{\overline{x}}}(\omega,A)=0$
\end{lemma}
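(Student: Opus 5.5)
The plan is to prove both halves of the statement simultaneously by induction on the complexity of $A$, uniformly in $\mathtt{\overline{x}}$ and $\omega$. Two closure facts about a maximal tableau-consistent $\beta$ will be used throughout, and I establish them first.

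\emph{Saturation.} If $\langle\mathtt{\overline{x}},\omega,A,i\rangle\in\beta$, then some $\langle\mathtt{\overline{x}},\omega,A,i\rangle$-extension of $\beta$ other than a fresh-world one is contained in $\beta$. The argument goes as follows. Suppose every such extension $\beta\cup\Delta$ were tableau-inconsistent. Each would then have a finite subset with a closed tableau. Applying the corresponding tableau rule to the node in question and appending these closed tableaux below the branches produces a closed tableau from a finite subset of $\beta$. This contradicts consistency. So some $\beta\cup\Delta$ is consistent, and maximality gives $\Delta\subseteq\beta$.

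\emph{Non-closure.} $\beta$ never contains both $\langle\mathtt{\overline{x}},\omega,B,1\rangle$ and $\langle\mathtt{\overline{y}},\omega,B,0\rangle$ with $\mathtt{\overline{x}}\sim\mathtt{\overline{y}}$. Such a pair is already a closed one-node-pair tableau.

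For the atomic case, the $1$ half holds by the definition of $\alpha$. For the $0$ half, suppose $\langle\mathtt{\overline{x}},\omega,p,0\rangle\in\beta$. Non-closure gives $\langle\mathtt{\overline{x}},\omega,p,1\rangle\notin\beta$, so $\alpha_{\mathtt{\overline{x}}}(\omega,p)=0$. Should the canonical assignment need to respect the $\mathtt{nn}$ identification, I would define $\alpha$ on $\sim$-classes; non-closure makes this well defined.

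The extensional connectives $\land,\lor,\neg$ and both forms of $\to$ (at $\varepsilon$ with $\mathtt{c}$, and at $\mathtt{\overline{y}}\neq\varepsilon$ with $\mathtt{l}$ and $\mathtt{r}$) go by saturation plus the induction hypothesis. Take for instance $\langle\mathtt{\overline{y}},\omega,B_1\to B_2,1\rangle\in\beta$. Saturation yields $\langle\mathtt{l\overline{y}},\omega,B_1,0\rangle\in\beta$ or $\langle\mathtt{r\overline{y}},\omega,B_2,1\rangle\in\beta$. The induction hypothesis then gives $\alpha_{\mathtt{l\overline{y}}}(\omega,B_1)=0$ or $\alpha_{\mathtt{r\overline{y}}}(\omega,B_2)=1$, so the clause for $\to$ returns $1$. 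I will read the garbled extension clauses in the paper, for example the $\varepsilon$/$\mathtt{\overline{x}}$ swaps, as matching the tableau rules of Figure~\ref{tableaurules}.

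The modal case is where the two special hypotheses do their work.
\begin{itemize}
\item For $\langle\mathtt{\overline{x}},\omega,\Box B,1\rangle\in\beta$, take any $v$ with $\langle\omega,v\rangle$ in the accessibility relation of $K$, that is, $\omega\R v\in\beta$. Note that I must read the paper's definition of $K$ as using $\omega_i\R\omega_j\in\beta$. Saturation for the $\Box$-$1$ rule gives $\langle\mathtt{b\overline{x}},v,B,1\rangle\in\beta$. The induction hypothesis then gives value $1$ at every successor, so the infimum is $1$. This includes the vacuous case with no successors.
\item For $\langle\mathtt{\overline{x}},\omega,\Box B,0\rangle\in\beta$, I use witnessedness in place of saturation. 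It supplies $\omega'$ with $\omega\R\omega'\in\beta$ and $\langle\mathtt{b\overline{x}},\omega',B,0\rangle\in\beta$. Since $\omega'\in\Omega_\beta$ is a world of $K$, the induction hypothesis makes the infimum $0$.
\end{itemize}

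The main obstacle is the saturation step, specifically making the splicing argument rigorous. The $\Box$-$1$ extension adds possibly infinitely many nodes, one for each $\R$-fact. To handle this, I would apply the rule only to the finitely many $\R$-facts appearing in the relevant finite subset, and then argue node-by-node: each needed node $\langle\mathtt{b\overline{x}},v,B,1\rangle$ individually is consistent with $\beta$, hence belongs to $\beta$ by maximality.
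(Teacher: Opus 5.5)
Your proposal is correct and follows essentially the paper's route. Both argue by induction on $A$, use tableau-consistency plus maximality to pull each rule's decomposition nodes into $\beta$, use consistency for the atomic $0$ half, and use witnessedness for $\Box$-$0$; your saturation step is the paper's case-by-case argument stated once, and your node-by-node handling of the $\Box$-$1$ case, which the paper leaves implicit, correctly fills that gap.
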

\begin{proof}
    By induction on $A$. If $A$ is an atomic $p$, then by the definition of the canonical $\beta$-correspondence, $\langle\mathtt{\overline{x}},\omega,p,1\rangle\in\beta$ iff $\alpha_{\mathtt{\overline{x}}}(\omega,p)=1$ and since $\beta$ is tableau-consistent, it follows that if $\langle\mathtt{\overline{x}},\omega,A,0\rangle\in\beta$ then $\alpha_{\mathtt{\overline{x}}}(\omega,A)=0$. Of the remaining cases, we consider the following:

    Suppose $\langle\mathtt{\overline{x}},\omega,A\land B,1\rangle\in\beta$. If $\beta\cup\{\langle\mathtt{\overline{x}},\omega,A,1\rangle\}$ is tableau-inconsistent, then there is finite $F\subseteq\beta$ so that $F\cup\{\langle\mathtt{\overline{x}},\omega,A,1\rangle\}$ closes. But then since $\langle\mathtt{\overline{x}},\omega,A\land B,1\rangle\in\beta$, $F\cup\{\langle\mathtt{\overline{x}},\omega,A\land B,1\rangle\}$ is also a finite subset of $\beta$ that must also close, contradicting the fact that $\beta$ is tableau-consistent. So $\beta\cup\{\langle\mathtt{\overline{x}},\omega,A\land B,1\rangle\}$ is tableau-consistent. Thus by maximality, we must have that $\langle\mathtt{\overline{x}},\omega,A\land B,1\rangle\in\beta$. Mutatis mutandis, the same argument works for $B$. By the inductive hypothesis, since $\{\langle\mathtt{\overline{x}},\omega,A,1\rangle,\langle\mathtt{\overline{x}},\omega,B,1\rangle\}\subseteq\beta$, $\alpha_{\mathtt{\overline{x}}}(\omega,A)=\alpha_{\mathtt{\overline{x}}}(\omega,B)=1$, and thus $\alpha_{\mathtt{\overline{x}}}(\omega,A\land B)=1$ as well.

    Suppose $\langle\varepsilon,\omega,A\to B,1\rangle\in\beta$. By a similar argument to the one in the $\land$-case, we can see that if both $\beta\cup\{\langle\mathtt{c},\omega,A,0\rangle\}$ and $\beta\cup\{\langle\mathtt{c},\omega,B,1\rangle\}$ are tableau-inconsistent, then $\beta$ must be as well. So one of these sets must be tableau-consistent. Thus by the fact that $\beta$ is maximal, we can see that either $\langle\mathtt{c},\omega,A,0\rangle\in\beta$ or $\langle\mathtt{c},\omega,B,1\rangle\in\beta$. Either way, by the inductive hypothesis we get that $\alpha_{\varepsilon}(\omega,A\to B)=1$

    Suppose $\langle\mathtt{\overline{x}},\omega,\Box A,0\rangle\in\beta$. Since $\beta$ is witnessed, it also contains both $\omega\R\omega'$ and $\langle\mathtt{b\overline{x}},\omega',B,0\rangle$ for some $\omega'$. By the inductive hypothesis, $\alpha_{\mathtt{b\overline{x}}}(\omega',B)=0$. And since $\omega\R\omega'\in\beta$, it follows that $\alpha_{\mathtt{\overline{x}}}(\omega,\Box B)=0$.
\end{proof}

\begin{theorem}
    If $\vDash_{\positions{MPos}} A$, then $\vdash_{\positions{MPos}} A$.
\end{theorem}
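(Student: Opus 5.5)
We argue by contraposition, following the standard Fitting--Mendelsohn shape of the completeness argument set up in the two preceding lemmas. Suppose $\not\vdash_{\positions{MPos}} A$, i.e., there is no closed tableau beginning at $\langle\varepsilon,\omega_1,A,0\rangle$. Then the one-element branchset $\beta_0=\{\langle\varepsilon,\omega_1,A,0\rangle\}$ is tableau-consistent: its only finite subsets are $\emptyset$ and $\beta_0$ itself, and neither admits a closed tableau, the empty one trivially and $\beta_0$ by hypothesis. Moreover $\Omega_{\beta_0}=\{\omega_1\}$ is finite. So the extension lemma applies and yields a maximal, witnessed, tableau-consistent $\gamma\supseteq\beta_0$.

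Next we form the canonical correspondence $\kappa_\gamma=\langle K,\iota,\alpha\rangle$ with $K=\langle\Omega_\gamma,\rho\rangle$. Here $\rho$ is read as $\{\langle\omega_i,\omega_j\rangle:\omega_i\R\omega_j\in\gamma\}$, the evident intent of the definition. This $K$ is a Kripke frame, since nothing is required of $\rho$, and $\alpha$ is an $\positions{MPos}^K$-assignment. By the truth lemma just proved, $\langle\varepsilon,\omega_1,A,0\rangle\in\gamma$ gives $\alpha_\varepsilon(\omega_1,A)=0$. Hence there is a frame, an assignment and a world at which $A$ takes value $0$ at position $\varepsilon$, so $\not\vDash_{\positions{MPos}} A$, which is the contrapositive of the theorem.

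The real work therefore sits in the truth lemma, and before invoking it we must be sure it covers every case, since the paper only displays $\land$-true, $\to$-true at $\varepsilon$, and $\Box$-false. The cases to fill in are:
\begin{itemize}
\item $\lor$, $\neg$, and $\land$-false, together with $\to$ at positions $\overline{\mathtt{x}}\neq\varepsilon$ (using $\mathtt{l}\overline{\mathtt{x}}$ and $\mathtt{r}\overline{\mathtt{x}}$). Each goes by the same pattern: if every extension were tableau-inconsistent, one could splice the closed tableaux under the corresponding rule to get a closed tableau from a finite subset of $\gamma$. Maximality then places the needed component in $\gamma$, and the induction hypothesis finishes.
\item $\Box$-true. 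If $\langle\overline{\mathtt{x}},\omega,\Box B,1\rangle\in\gamma$ and $\omega\R\nu\in\gamma$, then adding $\langle\mathtt{b}\overline{\mathtt{x}},\nu,B,1\rangle$ keeps $\gamma$ consistent, because the $\Box$-true rule could derive it anyway. By maximality it lies in $\gamma$, the induction hypothesis gives $\alpha_{\mathtt{b}\overline{\mathtt{x}}}(\nu,B)=1$ for every $\rho$-successor $\nu$, and so the infimum is $1$.
\end{itemize}

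The main obstacle is the atomic falsity case, which must respect the closure condition up to $\sim$. The canonical assignment should really be defined modulo $\sim_{\mathtt{nn}}$: set $\alpha_{\overline{\mathtt{x}}}(\omega,p)=1$ iff $\langle\overline{\mathtt{y}},\omega,p,1\rangle\in\gamma$ for some $\overline{\mathtt{y}}\sim\overline{\mathtt{x}}$. With that definition, consistency of $\gamma$ rules out $\langle\overline{\mathtt{x}},\omega,p,0\rangle$ coexisting with any $\sim$-equivalent true node, so the atomic case goes through. I would first check that the lemma's proof is insensitive to this choice. If some further condition on assignments is intended, I would adjust the definition of $\alpha$ accordingly.
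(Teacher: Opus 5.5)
Your proposal is correct and follows the paper's proof. You contrapose, extend $\{\langle\varepsilon,\omega_1,A,0\rangle\}$ to a maximal, witnessed, tableau-consistent $\gamma$ by the extension lemma, and use the truth lemma to see that the canonical correspondence falsifies $A$ at $\omega_1$ and position $\varepsilon$. Your redefinition of the canonical assignment modulo $\sim$ is harmless but unnecessary. The atomic falsity case already goes through with the paper's definition using only $\overline{\mathtt{x}}\sim\overline{\mathtt{x}}$. Also, maximality of $\gamma$ forces $\langle\overline{\mathtt{y}},\omega,p,1\rangle\in\gamma$ whenever $\langle\overline{\mathtt{x}},\omega,p,1\rangle\in\gamma$ and $\overline{\mathtt{x}}\sim\overline{\mathtt{y}}$, so the two definitions coincide.
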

\begin{proof}
    Suppose $\not\vdash_{\positions{MPos}} A$. Let $\beta=\{\langle\varepsilon,\omega_0,A,0\rangle\}$. By Lemma 3.3, there is a maximal, witnessed, tableau-consistent extension $\beta'$ of $\beta$. So by Lemma 3.4, the canonical correspondence gives us a model that falsifies $A$ at $\omega$ and $\varepsilon$. So $\not\vDash_{\positions{MPos}} A$. 
\end{proof}

\subsection{The Hyperformal Subset Characterization}

For any set of formulas $X$, we define $\positions{MPos}(X)$ as follows:
\begin{displaymath}
    \positions{MPos}(X)=\{A:\sigma_\varepsilon(A)\in X\text{ for all }\positions{MPos}\text{-substitutions }\sigma\}
\end{displaymath}
One can prove, by following exactly the same arguments as in Lemma 5 of \cite{standefer2024topics} or Theorem 3 of \cite{leach2024logic} that $\positions{MPos}(X)$ is the largest subset of $X$ closed under $\positions{MPos}$ substitutions. 

\begin{theorem}
    The following are equivalent:
    \begin{itemize}
        \item $A\in\positions{MPos}(K)$.
        \item For some atomic and injective-up-to-faithfulness $\iota$, $\iota(A)\in\logic{K}$.
        \item $A\in\logic{K}^{\positions{MPos}}$.
    \end{itemize}
\end{theorem}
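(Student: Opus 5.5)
I will prove the cycle (1)$\Rightarrow$(2)$\Rightarrow$(3)$\Rightarrow$(1), using the semantic characterization of $\logic{K}^{\positions{MPos}}$ as the set of $\positions{MPos}$-valid formulas; by the soundness and completeness theorems just proved, this coincides with the tableau characterization. Throughout I fix one atomic substitution $\iota$ that is injective up to faithfulness, given by $\iota_{\overline{\mathtt{x}}}(p)=p_{\mathsf{g}(\langle\overline{\mathtt{x}},p\rangle)}$ as in the proof of Theorem~\ref{theorem:BBox-MPosVSP}. The step (1)$\Rightarrow$(2) is immediate: $\iota$ is in particular an $\positions{MPos}$-substitution, so $A\in\positions{MPos}(\logic{K})$ gives $\iota_\varepsilon(A)\in\logic{K}$. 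I write $\iota(A)$ for $\iota_\varepsilon(A)$. Reading (2) existentially is harmless, because any two such substitutions differ only by a bijective renaming of atoms, and $\logic{K}$ is closed under uniform substitution.

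For (2)$\Rightarrow$(3), I will prove a transfer lemma. Given a frame $K$ and an $\positions{MPos}^K$-assignment $\alpha$, define an ordinary valuation $V$ on $K$ by setting $V(w,p_{\mathsf{g}(\langle\overline{\mathtt{x}},p\rangle)})=\alpha_{\overline{\mathtt{x}}}(w,p)$. This is well defined provided $\alpha$ respects $\sim_{\mathtt{nn}}$. That is the one delicate point: either validity is restricted to faithful assignments, in parallel with faithful substitutions, or one observes that the two-valued clauses for $\neg$ make $\mathtt{nn}$-insertions semantically idle only when $\alpha$ is faithful. I will adopt the faithful reading, matching the closure condition on tableaux, whose branches close modulo $\sim$. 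An induction on $B$ then shows $\alpha_{\overline{\mathtt{x}}}(w,B)=V(w,\iota_{\overline{\mathtt{x}}}(B))$ for all $\overline{\mathtt{x}}$ and $w$; each clause of the $\positions{MPos}^K$-assignment mirrors the corresponding clause of Figure~\ref{mpossubs}, with the $\Box$ clause going through the accessibility relation unchanged. So if $\iota(A)\in\logic{K}$, then $\alpha_\varepsilon(w,A)=V(w,\iota(A))=1$ for every $\alpha$ and $w$, and $A$ is $\positions{MPos}$-valid.

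For (3)$\Rightarrow$(1), let $A$ be $\positions{MPos}$-valid and let $\sigma$ be any $\positions{MPos}$-substitution; I must show $\sigma_\varepsilon(A)\in\logic{K}$. Take any Kripke model $\langle K,V\rangle$ and define $\alpha_{\overline{\mathtt{x}}}(w,p)=V(w,\sigma_{\overline{\mathtt{x}}}(p))$. This $\alpha$ is faithful because $\sigma$ is. By induction on $B$, $\alpha_{\overline{\mathtt{x}}}(w,B)=V(w,\sigma_{\overline{\mathtt{x}}}(B))$, where the clauses line up exactly as in the previous step. Hence $V(w,\sigma_\varepsilon(A))=\alpha_\varepsilon(w,A)=1$, so $\sigma_\varepsilon(A)$ is $\logic{K}$-valid, and by completeness of $\logic{K}$ it lies in $\logic{K}$.

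The main obstacle is the well-definedness in the transfer lemma for (2)$\Rightarrow$(3): it needs the assignments in the semantics of $\logic{K}^{\positions{MPos}}$ to be faithful. I would handle this by stating explicitly that $\positions{MPos}^K$-assignments satisfy $\alpha_{\overline{\mathtt{x}}\mathtt{nn}\overline{\mathtt{y}}}=\alpha_{\overline{\mathtt{x}}\overline{\mathtt{y}}}$. This matches both the closure condition of the tableaux, which identifies positions up to $\sim$, and the canonical correspondence, which can be taken faithful because tableau-consistency forbids conflicting $\sim_{\mathtt{nn}}$-equivalent atomic entries. Everything else is routine structural induction.
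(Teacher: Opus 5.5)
Your proof is correct, and it is essentially the standard transfer argument that the paper leaves to the reader by pointing to \cite{standefer2024topics,leach2024logic}: $\iota$ is itself a substitution for (1)$\Rightarrow$(2), an $\positions{MPos}^K$-assignment is pushed to a Kripke valuation through $\iota$ for (2)$\Rightarrow$(3), and a Kripke valuation $V$ is pulled back to the assignment $V\circ\sigma$ for (3)$\Rightarrow$(1). Your insistence on faithful assignments is genuinely needed, not just a convenience: with the paper's literally unrestricted assignments, $p\to\neg\neg p$ belongs to $\positions{MPos}(\logic{K})$ (faithfulness of $\sigma$ gives $\sigma_{\mathtt{nnc}}(p)=\sigma_{\mathtt{c}}(p)$), yet it is falsified by any $\alpha$ with $\alpha_{\mathtt{c}}(w,p)=1$ and $\alpha_{\mathtt{nnc}}(w,p)=0$.
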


The proof of this result requires only the very mildest of changes from the proofs of similar results in \cite{standefer2024topics} or \cite{leach2024logic}, and is thus left to the reader. Note that this gives us a third characterization of $\logic{K}^{\positions{MPos}}$: it is the largest $\positions{MPos}$-hyperformal subset of $\logic{K}$.

We will end by proving that $\logic{K}^{\positions{MPos}}$ has the $\positions{MPos}$-variable sharing property. We need a bit of setup first. To begin, we identify the subsets $\positions{MPos}_+$ and $\positions{MPos}_-$ of $\positions{MPos}$ as follows:
    \begin{itemize}
        \item $\varepsilon\in\positions{MPos}_+$.
        \item If $\mathtt{\overline{x}}\in\positions{MPos}_+$, then $\mathtt{l\overline{x}}\in\positions{MPos}_-$, $\mathtt{r\overline{x}}\in\positions{MPos}_+$, $\mathtt{n\overline{x}}\in\positions{MPos}_-$, and $\mathtt{b\overline{x}}\in\positions{MPos}_+$.
        \item If $\mathtt{\overline{x}}\in\positions{MPos}_-$, then $\mathtt{l\overline{x}}\in\positions{MPos}_+$, $\mathtt{r\overline{x}}\in\positions{MPos}_-$, $\mathtt{n\overline{x}}\in\positions{MPos}_+$, and $\mathtt{b\overline{x}}\in\positions{MPos}_-$.
    \end{itemize}

Let $K$ be a one-world model with the universal accessibility relation. 

Suppose $A$ and $C$ share no atoms. Define the $\positions{MPos}^K$-assignments $f^+_A$ and $f^-_A$ as follows:
    \begin{align*}
    f^+_{A,\mathtt{\overline{y}}}(w,p) & =\left\{
        \begin{array}{rl} 
            1 & \text{if $\mathtt{\overline{y}=\overline{x}c}$, $\mathtt{\overline{x}}\in\positions{MPos}_+$, and $p$ occurs at $\mathtt{\mathsf{\overline{x}c}}$ in $A$} \\
            0 & \text{otherwise}
        \end{array}\right. \\
        f^-_{A,\mathtt{\overline{y}}}(w,p)& =\left\{
        \begin{array}{rl} 
            0 & \text{if $\mathtt{\overline{y}=\overline{x}c}$, $\mathtt{\overline{x}}\in\positions{MPos}_+$, and $p$ occurs at $\mathtt{\overline{x}c}$ in $A$} \\
            1 & \text{otherwise}
        \end{array}\right. \\
    \end{align*}

\begin{lemma}
    If $B$ is a subformula of $A$, then 
    \begin{itemize}
    	\item $f^+_{A,\mathtt{\overline{x}c}}(w,B)=1$ if $\mathtt{\overline{x}}\in\positions{MPos}_+$ and $B$ occurs in $A\to C$ at $\mathtt{\overline{x}c}$,
		\item $f^+_{A,\mathtt{\overline{x}c}}(w,B)=0$ if $\mathtt{\overline{x}}\in\positions{MPos}_-$ and $B$ occurs in $A\to C$ at $\mathtt{\overline{x}c}$,
		\item $f^-_{A,\mathtt{\overline{x}c}}(w,B)=0$ if $\mathtt{\overline{x}}\in\positions{MPos}_+$ and $B$ occurs in $C\to A$ at $\mathtt{\overline{x}c}$,
		\item $f^+_{A,\mathtt{\overline{x}c}}(w,B)=1$ if $\mathtt{\overline{x}}\in\positions{MPos}_-$ and $B$ occurs in $C\to A$ at $\mathtt{\overline{x}c}$.
	\end{itemize}
\end{lemma}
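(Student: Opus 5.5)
The plan is to prove all four clauses together by induction on the complexity of the subformula $B$, working in the one-world model $K$ with the universal accessibility relation. There are three background observations to record first.

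First, $\positions{MPos}_+$ and $\positions{MPos}_-$ are disjoint. A string's membership is determined by the parity of the number of occurrences of $\mathtt{l}$ and $\mathtt{n}$ in it, since $\mathtt{r}$ and $\mathtt{b}$ preserve sign. This gives a simple induction on the length of $\mathtt{\overline{x}}$.

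Second, every occurrence of a subformula of $A$ inside $A\to C$ (or $C\to A$) sits at a position of the form $\mathtt{\overline{x}c}$. Such a position is never $\varepsilon$, so only the clause for $\overline{\mathtt{x}}\neq\varepsilon$ is used when evaluating a conditional there.

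Third, because $K$ has a single world $w$ with $\langle w,w\rangle\in\rho$, the box clause collapses to $f_{\overline{\mathtt{x}}}(w,\Box D)=f_{\mathtt{b}\overline{\mathtt{x}}}(w,D)$.

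For the base case, take $B=p$ occurring at $\mathtt{\overline{x}c}$. If $\mathtt{\overline{x}}\in\positions{MPos}_+$, the definition of $f^+_A$ gives value $1$ directly. If $\mathtt{\overline{x}}\in\positions{MPos}_-$, then by disjointness the condition "$\mathtt{\overline{y}}=\mathtt{\overline{z}c}$ with $\mathtt{\overline{z}}\in\positions{MPos}_+$ and $p$ at $\mathtt{\overline{z}c}$" fails, since the decomposition of $\mathtt{\overline{x}c}$ as $\mathtt{\overline{z}c}$ is unique. So the value is $0$. The clauses for $f^-_A$ are dual.

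I read the fourth bullet as concerning $f^-_A$ with value $1$; as printed, with $f^+_A$, it would follow from the second bullet only if the position were computed in $A\to C$. I will prove the $f^-$ version.

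For the inductive step, the cases are as follows.
\begin{itemize}
\item For $B=D\land E$ or $D\lor E$ at $\mathtt{\overline{x}c}$, both $D$ and $E$ sit at the same position. The induction hypothesis together with $\min$ or $\max$ gives the result.
\item For $B=\neg D$, $D$ sits at $\mathtt{n\overline{x}c}$ and $\mathtt{n\overline{x}}$ has the opposite sign to $\mathtt{\overline{x}}$. The clause $1-f_{\mathtt{n}\overline{\mathtt{x}}\mathtt{c}}(w,D)$ then flips the value back, as required.
\item For $B=\Box D$, $D$ sits at $\mathtt{b\overline{x}c}$, which has the same sign as $\mathtt{\overline{x}}$. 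The collapse of the box clause then transfers the value unchanged.
\item For $B=D\to E$ at $\mathtt{\overline{x}c}$, $D$ sits at $\mathtt{l\overline{x}c}$ (opposite sign) and $E$ sits at $\mathtt{r\overline{x}c}$ (same sign). If $\mathtt{\overline{x}}$ is positive, then under $f^+_A$ the antecedent gets $0$, so $\max(1-0,\cdot)=1$. If $\mathtt{\overline{x}}$ is negative, the antecedent gets $1$ and the consequent $0$, so the value is $0$. The $f^-_A$ cases are exactly dual.
\end{itemize}

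The main obstacle is bookkeeping rather than mathematics. One must check that the positions $\mathtt{l\overline{x}c}$, $\mathtt{n\overline{x}c}$, $\mathtt{b\overline{x}c}$ produced by the recursive clauses really are the positions assigned by $\mathsf{pos}$ to the immediate subformulas, with prefixes added on the left. One must also check that the sign bookkeeping via $\positions{MPos}_\pm$ matches the clauses, in particular that $\mathtt{l}$ and $\mathtt{n}$ flip sign while $\mathtt{r}$ and $\mathtt{b}$ preserve it. Disjointness of $\positions{MPos}_+$ and $\positions{MPos}_-$ is what keeps the atomic case well defined. I will state it as a separate preliminary claim before running the induction.
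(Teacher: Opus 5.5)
Your proof is correct and follows the paper's own route: induction on $B$, with the $\Box$ case handled by the one-world collapse $f_{\overline{\mathtt{x}}}(w,\Box D)=f_{\mathtt{b}\overline{\mathtt{x}}}(w,D)$ (you correctly keep the $\mathtt{b}$ prefix, which the paper's displayed equation drops), and the Boolean and conditional cases, which the paper delegates to earlier lemmas in the literature, written out explicitly. Your reading of the fourth bullet as concerning $f^-_A$ is the right one, but as printed that bullet would contradict the second bullet rather than follow from it: the top-level clause puts a subformula of $A$ at the same position $\overline{\mathtt{x}}\mathtt{c}$ whether $A$ is the antecedent or the consequent.
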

\begin{proof}
    By induction on $B$. Every case except the $\Box$-case goes exactly as in the proof of Lemmas 11 and 12 in \cite{standefer2024topics}. For that case, since $K$ has exactly one world and the universal accessibility relation, $f^+_{A,\mathtt{\overline{x}c}}(\Box B)=f^+_{A,\mathtt{\overline{x}c}}(B)$ and $f^-_{A,\mathtt{\overline{x}c}}(\Box B)=f^-_{A,\mathtt{\overline{x}c}}(B)$. So the inductive hypothesis immediately finishes the job.
\end{proof}

\begin{lemma}
    Suppose that $A$ and $B$ share no atoms. Define the assignment $h$ by
    \begin{displaymath}
        h_{\mathtt{\overline{x}}}(w,p)=\left\{
        \begin{array}{rl}
            f^+_{A,\mathtt{\overline{x}}}(w, p) & \text{ if } p \text{ occurs in $A$ and occurs at $\mathtt{\overline{x}}$ in $A\to B$} \\
            f^-_{B,\mathtt{\overline{x}}}(w, p) & \text{ if } p \text{ occurs in $A$ and occurs at $\mathtt{\overline{x}}$ in $A\to B$} \\
            \text{whatever} & \text{ otherwise}.
        \end{array}\right.
    \end{displaymath}
    Then if $C$ occurs at $\mathtt{\overline{x}}$ in $A$ then $h_{\mathtt{\overline{x}}}(w,C)=f^+_{A,\mathtt{\overline{x}}}(w,C)$ and if $C$ occurs at $\mathtt{\overline{x}}$ in $B$ then $h_{\mathtt{\overline{x}}}(w,C)=f^+_{B,\mathtt{\overline{x}}}(w,C)$.
\end{lemma}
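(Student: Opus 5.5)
The plan is a structural induction on the subformula $C$, after fixing the obvious reading of the typo-ridden definition of $h$. The second clause should read ``$p$ occurs in $B$ and occurs at $\mathtt{\overline{x}}$ in $A\to B$''. Since $A$ and $B$ share no atoms, the two clauses never compete for the same atom, so $h$ is well defined. Likewise the second conclusion should be $h_{\mathtt{\overline{x}}}(w,C)=f^-_{B,\mathtt{\overline{x}}}(w,C)$, matching the clause used for atoms of $B$. I will prove both conclusions together, since the induction treats $A$ and $B$ symmetrically. Throughout, $K$ is the one-world frame with universal accessibility, and positions are read inside $A\to B$, so every occurrence in $A$ or $B$ sits at a position of the form $\mathtt{\overline{y}c}$.

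\textbf{Base case.} If $C$ is an atom $p$ occurring at $\mathtt{\overline{x}}$ in $A$, then $p$ occurs in $A$ and occurs at $\mathtt{\overline{x}}$ in $A\to B$. The first clause of $h$ therefore applies, giving $h_{\mathtt{\overline{x}}}(w,p)=f^+_{A,\mathtt{\overline{x}}}(w,p)$. The case for $B$ is symmetric using the second clause.

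\textbf{Inductive step.} Each clause of the $\positions{MPos}^K$-evaluation computes the value of $C$ at $\mathtt{\overline{x}}$ from the values of its immediate subformulas at the positions $\mathtt{\overline{x}}$, $\mathtt{n\overline{x}}$, $\mathtt{l\overline{x}}$, $\mathtt{r\overline{x}}$, $\mathtt{b\overline{x}}$ (or $\mathtt{c}$ when $\mathtt{\overline{x}}=\varepsilon$). These are exactly the positions at which those immediate subformulas occur inside $A\to B$. So the induction hypothesis applies to each immediate subformula at its own position. Because $h$ and $f^+_A$ are evaluated by the same recursive clauses, the values agree, and likewise for $h$ and $f^-_B$. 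For $\Box$ the world set is a singleton with $\langle w,w\rangle\in\rho$, so both $h_{\mathtt{\overline{x}}}(w,\Box D)$ and $f^\pm_{\mathtt{\overline{x}}}(w,\Box D)$ collapse to the value of $D$ at $\mathtt{b\overline{x}}$, and the hypothesis finishes the case.

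\textbf{Main obstacle.} The delicate point is bookkeeping. One must check that ``occurs at $\mathtt{\overline{x}}$ in $A$'' in the conclusion is read relative to $A\to B$, so that the induction hypothesis is invoked at the correct positions. Because $\mathtt{c}$ appears only in the first step from $\varepsilon$, the positions of subformulas of $A$ are never $\varepsilon$. Hence the clause $\alpha_{\mathtt{\overline{x}}}(\omega,A\to B)$ for $\mathtt{\overline{x}}\neq\varepsilon$ is always the one used, and no case split on $\varepsilon$ arises. The faithfulness convention ($\mathtt{nn}$ cancellation) does not interfere either, since we compare two assignments evaluated at literally the same positions.
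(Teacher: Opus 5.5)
Your proposal is correct and follows the paper's proof. The paper's proof is simply a structural induction on $C$. The paper handles the two conclusions separately, while you treat them symmetrically; this difference is immaterial. Your readings of the typos are the right ones: the second clause of $h$ should concern atoms occurring in $B$, and the second conclusion should be $h_{\mathtt{\overline{x}}}(w,C)=f^-_{B,\mathtt{\overline{x}}}(w,C)$, which is exactly what the subsequent theorem needs to obtain $h_{\mathtt{c}}(w,B)=0$.
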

\begin{proof}
    By induction on $C$, separately for each conclusion. 
\end{proof}

\begin{theorem}
    $\logic{K}^{\positions{MPos}}$ enjoys the variable sharing property. 
\end{theorem}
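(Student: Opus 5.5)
We argue by contraposition: if $A$ and $B$ share no atoms, we exhibit an $\positions{MPos}^K$-assignment on the one-world frame $K$ with universal accessibility that gives $A\to B$ the value $0$ at $\varepsilon$. By soundness this suffices, since an invalid formula is not in $\logic{K}^{\positions{MPos}}$.

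The assignment is the $h$ of the preceding lemma. On atoms of $A$ it agrees with $f^+_A$, on atoms of $B$ it agrees with $f^-_B$, and it is arbitrary elsewhere. This is well defined exactly because $A$ and $B$ share no atoms, so no pair $\langle\mathtt{\overline{x}},p\rangle$ receives conflicting instructions.

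First, I evaluate the top level. We have $h_\varepsilon(w,A\to B)=\max(1-h_{\mathtt{c}}(w,A),h_{\mathtt{c}}(w,B))$, so it suffices to show $h_{\mathtt{c}}(w,A)=1$ and $h_{\mathtt{c}}(w,B)=0$.

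Second, I reduce to $f^\pm$. By the preceding lemma, applied to $A$ occurring at $\mathtt{c}$ in $A\to B$ and to $B$ occurring at $\mathtt{c}$, the value $h_{\mathtt{c}}(w,A)$ equals $f^+_{A,\mathtt{c}}(w,A)$. Likewise $h_{\mathtt{c}}(w,B)$ equals the corresponding $f^-_{B,\mathtt{c}}(w,B)$; the lemma's second clause should read $f^-_B$, and I would fix that typo.

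Third, I apply the earlier polarity lemma. Since $\varepsilon\in\positions{MPos}_+$, it gives $f^+_{A,\mathtt{c}}(w,A)=1$, with $A$ occurring in $A\to C$ at $\mathtt{\varepsilon c}$. It also gives $f^-_{B,\mathtt{c}}(w,B)=0$, with $B$ occurring in $C\to B$ at $\mathtt{c}$, where $C$ is the other side. Hence $h_\varepsilon(w,A\to B)=0$, so $A\to B$ is not $\positions{MPos}$-valid. By the completeness and hyperformal-subset characterizations, $A\to B\notin\logic{K}^{\positions{MPos}}$.

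The main obstacle is checking that the position bookkeeping matches. Both lemmas are phrased with positions of the form $\mathtt{\overline{x}c}$, i.e.\ the trailing $\mathtt{c}$ contributed by the outermost arrow. The polarity lemma for $f^+_A$ was stated for occurrences inside $A\to C$, which is the same context as $A\to B$, since the positions inside $A$ do not depend on the consequent. I would spell this out explicitly, and similarly for $f^-_B$ in $C\to B$.

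A stronger conclusion is also available: the $\positions{MPos}$-variable sharing property itself. It follows by the same method as Theorem \ref{theorem:BBox-MPosVSP}. We apply the injective-up-to-faithfulness substitution $\iota$, use that $\logic{K}^{\positions{MPos}}$ is $\positions{MPos}$-hyperformal (it is the largest such subset of $\logic{K}$), and then invoke the plain variable sharing property just proved.
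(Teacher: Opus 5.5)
Your proposal is correct and follows the paper's own proof. Like the paper, you work on the one-world reflexive frame and glue $f^+_A$ and $f^-_B$ into $h$, which is well defined because $A$ and $B$ share no atoms; the two lemmas then give $h_{\mathtt{c}}(w,A)=1$ and $h_{\mathtt{c}}(w,B)=0$, hence $h_\varepsilon(w,A\to B)=0$. Your repair of the $f^+_B$/$f^-_B$ typo is right, and the one quibble is that passing from invalidity to $A\to B\notin\logic{K}^{\positions{MPos}}$ needs soundness (or nothing at all, since $\logic{K}^{\positions{MPos}}$ is introduced semantically), not completeness.
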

\begin{proof}
    Suppose $A$ and $B$ don't share atoms. Define $h$ as above. By the preceding lemmas, $h_{\mathtt{c}}(w,A)=1$ and $h_{\mathtt{c}}(w,B)=0$, so $h_{\varepsilon}(w, A\to B)=0$.
\end{proof}

\begin{corollary}
    $\logic{K}^{\positions{MPos}}$ has the $\positions{MPos}$-variable sharing property.
\end{corollary}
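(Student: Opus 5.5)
The plan mirrors the proof of Theorem~\ref{theorem:BBox-MPosVSP}: combine closure of $\logic{K}^{\positions{MPos}}$ under $\positions{MPos}$-substitutions with the plain variable sharing property from the preceding theorem. Suppose $A\to B\in\logic{K}^{\positions{MPos}}$. Choose a function $\mathsf{g}$ from $\positions{MPos}\times\At$ to indices with $\mathsf{g}(\langle\mathtt{\overline{x}},p\rangle)=\mathsf{g}(\langle\mathtt{\overline{y}},q\rangle)$ iff $\mathtt{\overline{x}}\sim_{\mathtt{nn}}\mathtt{\overline{y}}$ and $p=q$. Set $\sigma_{\mathtt{\overline{x}}}(p)=p_{\mathsf{g}(\langle\mathtt{\overline{x}},p\rangle)}$. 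This $\sigma$ is atomic and injective up to faithfulness. Since $\sigma_{\mathtt{\overline{x}nn\overline{y}}}(p)=\sigma_{\mathtt{\overline{xy}}}(p)$, it is also a genuine (faithful) $\positions{MPos}$-substitution.

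The first step is closure. By the hyperformal subset characterization, $\logic{K}^{\positions{MPos}}=\positions{MPos}(\logic{K})$, and this is closed under $\positions{MPos}$-substitutions. Hence $\sigma_\varepsilon(A\to B)=\sigma_{\mathtt{c}}(A)\to\sigma_{\mathtt{c}}(B)$ lies in $\logic{K}^{\positions{MPos}}$. If one prefers to avoid quoting closure, one can argue semantically instead. From a countermodel $\alpha$ to $\sigma_\varepsilon(A\to B)$, build a countermodel to $A\to B$ by
\[
\alpha'_{\mathtt{\overline{y}}}(w,p)=\alpha_{\mathtt{\overline{z}}}(w,\sigma_{\mathtt{\overline{y}}}(p)),
\]
where $\mathtt{\overline{z}}$ is obtained by a position-composition bookkeeping. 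For atomic $\sigma$, that bookkeeping is straightforward.

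The second step applies the preceding theorem, which gives the ordinary variable sharing property for $\logic{K}^{\positions{MPos}}$. So $\sigma_{\mathtt{c}}(A)$ and $\sigma_{\mathtt{c}}(B)$ share some atom $p_k$. I then trace this atom back. An occurrence of $p_k$ in $\sigma_{\mathtt{c}}(A)$ arises from an occurrence $A[p]$ with $\mathsf{g}(\langle\mathsf{pos}(A[p]\to B),p\rangle)=k$. This uses the fact that $\sigma_{\varepsilon}$ applied to $A\to B$ feeds each atom occurrence exactly its position in $A\to B$, which is a routine induction on the recursive clauses of Figure~\ref{mpossubs}. Likewise, the occurrence in $\sigma_{\mathtt{c}}(B)$ comes from some $B[q]$ with $\mathsf{g}(\langle\mathsf{pos}(A\to B[q]),q\rangle)=k$. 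Injectivity of $\mathsf{g}$ up to $\sim_{\mathtt{nn}}$ then yields $p=q$ and $\mathsf{pos}(A[p]\to B)\sim_{\mathtt{nn}}\mathsf{pos}(A\to B[p])$, which is exactly the $\positions{MPos}$-variable sharing property.

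The main obstacle is the bookkeeping lemma that $\sigma_\varepsilon$ places at each atom occurrence the image under $\sigma$ of its position. There are two points of care in that induction. The first is the special treatment of the root, where the clause uses $\mathtt{c}$ rather than $\mathtt{l}/\mathtt{r}$. The second is that negation prefixes $\mathtt{n}$, which is why only $\sim_{\mathtt{nn}}$-equivalence, not identity, of positions can be recovered.

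I would also double-check the direction of closure under substitution. The equivalence theorem is cited as established, and I will rely on it. If it proves delicate, I will fall back on the semantic countermodel transfer described above. That transfer is easy here because $\sigma$ sends atoms to atoms, so no nested position composition arises.
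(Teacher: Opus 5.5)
Your proposal is correct and follows the paper's argument: the paper's one-line proof combines the $\positions{MPos}$-hyperformality of $\logic{K}^{\positions{MPos}}$ with the preceding plain variable sharing theorem, and your atomic, injective-up-to-faithfulness substitution argument (modelled on Theorem~\ref{theorem:BBox-MPosVSP}) is exactly that step spelled out. Your semantic fallback is also sound: since an atomic $\sigma$ leaves each atom occurrence at its original position, you can take $\mathtt{\overline{z}}=\mathtt{\overline{y}}$, which gives a self-contained justification of the one closure instance you need.
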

\begin{proof}
    Since $\logic{K}^{\positions{MPos}}$ is $\positions{MPos}$-hyperformal, this follows immediately from the previous theorem.
\end{proof}

    \section{Future Work and Concluding Remarks}

    We've characterized $\logic{K}^{\positions{MPos}}$ semantically and via a tableaux system. To better understand it, one would like an axiom system.\footnote{Thanks to a referee for pushing us on this point.} But to date logics formed by restricting known logics to their largest hyperformal (in some sense) fragments have resisted axiomatization. The only thing that \emph{is} known on this front is the rather unsatisfying result proved in \cite{leach2024logic}: for logics like these \emph{admit} recursive axiomatizations. Producing intelligible axiomatizations remains, in spite of this, as an outstanding project.


    $\logic{B}^\Box$ is $\positions{MPos}$-hyperformal and we can use this to extract a rather shockingly strong variable sharing result for its theorems. But it's not the strongest logic with these features, nor is it the largest subset of classical logic with these features. $\logic{K}^{\positions{MPos}}$ holds that honor. 

    Perhaps there are very good reasons for preferring $\logic{B}^\Box$ to $\logic{K}^{\positions{MPos}}$. If so, they ought to be enunciated, and we now know some things they \emph{can't} be: they can't be related to variable sharing or subclassicality. And one must admit that semantically, $\logic{K}^{\positions{MPos}}$ definitely has advantages over $\logic{B}^\Box$. Also note that since $\logic{B}^\Box$ is subclassical and closed under $\positions{MPos}$-substitutions, it is contained (properly as it turns out, though we leave it to the reader to verify this) in $\logic{K}^{\positions{MPos}}$. So anything you might want to do with a logic like $\logic{B}^\Box$, you can also do with $\logic{K}^{\positions{MPos}}$. 

    Another thing worth noting: ``relevant'' logics defined by pasting a simple filter on top of classical logic (a `syntactic sieve', as pejoratively described by Richard Sylvan in \cite{routley1982}), can be easily seen to incur nontransitivity in the sense that they will, for some $A$, $B$, and $C$, include $A\to B$ and $B\to C$ without including $A\to C$. Were this true for $\logic{K}^{\positions{MPos}}$, it would be a fairly serious reason to set it aside. But while $\logic{K}^{\positions{MPos}}$ \emph{is} the largest sublogic of $\logic{K}$ with a given feature, and it is \emph{a} sublogic of $\logic{K}$ with the $\positions{MPos}$-variable sharing property, $\logic{K}^{\positions{MPos}}$ is \emph{not} the largest sublogic of $\logic{K}$ with the $\positions{MPos}$-variable sharing property.
    
    As an example, consider the contraction axiom: $(p\to(p\to q))\to(p\to q)$. The leftmost $p$ and the $p$ antecedent of the terminal conditional both occur at the position $\positions{lc}$. So this formula \emph{does} have the $\positions{MPos}$-variable sharing property. But it isn't in $\logic{K}^{\positions{MPos}}$ because $\logic{K}^{\positions{MPos}}$ isn't defined by its possessing a given variable sharing property but by its possessing a given closure property---namely, closure under $\positions{MPos}$-substitutions. Since not all $\positions{MPos}$-substitutions map contraction to a theorem of classical logic, contraction isn't a member of $\logic{K}^{\positions{MPos}}$. 
    
    So $\logic{K}^{\positions{MPos}}$ isn't plausibly characterized as $\logic{K}$ with a syntactic sieve attached, and thus isn't subject to any of the usual criticisms that attach to such logics. And there's more: not only can't it be criticized on those fronts, it also doesn't \emph{suffer} on those fronts. In particular, we can show the following: 
    \begin{theorem}
        $\logic{K}^{\positions{MPos}}$ is transitive.
    \end{theorem}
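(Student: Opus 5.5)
The plan is to argue semantically, using the soundness and completeness theorems above so that membership in $\logic{K}^{\positions{MPos}}$ can be identified with $\positions{MPos}$-validity. Concretely, I take ``transitive'' to mean: whenever $A\to B$ and $B\to C$ are both in $\logic{K}^{\positions{MPos}}$, so is $A\to C$. It then suffices to show that $\vDash_{\positions{MPos}} A\to B$ and $\vDash_{\positions{MPos}} B\to C$ together imply $\vDash_{\positions{MPos}} A\to C$.

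The key observation concerns the clause for a conditional evaluated at $\varepsilon$, namely $\alpha_{\varepsilon}(\omega, A\to B)=\max(1-\alpha_{\mathtt{c}}(\omega,A),\alpha_{\mathtt{c}}(\omega,B))$. Both the antecedent and the consequent of a top-level conditional are evaluated at the \emph{same} position $\mathtt{c}$. Hence the middle formula $B$ is evaluated at $\mathtt{c}$ both when it is the consequent of $A\to B$ and when it is the antecedent of $B\to C$.

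So fix an arbitrary Kripke frame $K$, an $\positions{MPos}^K$-assignment $\alpha$ and a world $\omega$, and suppose $\alpha_{\mathtt{c}}(\omega,A)=1$.
\begin{itemize}
\item Validity of $A\to B$ at this same $\alpha$ and $\omega$ gives $\alpha_{\mathtt{c}}(\omega,B)=1$.
\item Validity of $B\to C$, again at the same $\alpha$ and $\omega$, gives $\alpha_{\mathtt{c}}(\omega,C)=1$.
\end{itemize}
Therefore $\alpha_{\varepsilon}(\omega,A\to C)=1$. Since $K$, $\alpha$ and $\omega$ were arbitrary, $\vDash_{\positions{MPos}}A\to C$, and completeness returns $A\to C\in\logic{K}^{\positions{MPos}}$.

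I expect the main obstacle to be a bookkeeping one rather than a mathematical one. One must make sure that the three characterizations really coincide, in particular that $\logic{K}^{\positions{MPos}}$ as the tableau/semantic logic is the set to which the theorem refers. If one instead wanted to work with the $\positions{MPos}(\logic{K})$ characterization, the analogous argument would run as follows. Given an $\positions{MPos}$-substitution $\sigma$, the formulas $\sigma_\varepsilon(A\to B)=\sigma_{\mathtt{c}}(A)\to\sigma_{\mathtt{c}}(B)$ and $\sigma_\varepsilon(B\to C)=\sigma_{\mathtt{c}}(B)\to\sigma_{\mathtt{c}}(C)$ are both in $\logic{K}$. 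Ordinary transitivity of $\logic{K}$ then yields $\sigma_\varepsilon(A\to C)\in\logic{K}$. This works for exactly the same reason: $B$ is sent to the identical formula $\sigma_{\mathtt{c}}(B)$ in both premises. I would mention this second route as a cross-check.
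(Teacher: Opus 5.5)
Your proposal is correct and rests on the same observation as the paper's proof: $B$ is evaluated at position $\mathtt{c}$ in both premises. The paper simply runs the argument contrapositively, taking a valuation with $A$ true and $C$ false at $\mathtt{c}$ and splitting on the value of $B$ at $\mathtt{c}$ to falsify either $A\to B$ or $B\to C$; your direct version is, if anything, more careful about worlds, and your substitution cross-check through $\positions{MPos}(\logic{K})$ is also sound.
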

    \begin{proof}
        Suppose $\sigma_\varepsilon(A\to C)=0$. Then $\sigma_{\positions{c}}(A)=1$ and $\sigma_{\positions{c}}(C)=0$. Now observe that if $\sigma_{\positions{c}}(B)=1$, then $\sigma_\varepsilon(B\to C)=0$ and if $\sigma_{\positions{c}}(B)=0$, then $\sigma_\varepsilon(A\to B)=0$. So if $A\to C\not\in\logic{K}^{\positions{MPos}}$, then either $A\to B\not\in\logic{K}^{\positions{MPos}}$ or $B\to C\not\in\logic{K}^{\positions{MPos}}$. Contraposing, if $A\to B\in\logic{K}^{\positions{MPos}}$ and $B\to C\in\logic{K}^{\positions{MPos}}$, then $A\to C\in\logic{K}^{\positions{MPos}}$.
    \end{proof}

\noindent This observation, then, cuts against a charge of artificiality in the formulation of $\logic{K}^{\positions{MPos}}$ insofar as the pathologies that follow from truly artificial approaches to relevance---like syntactic sieves---do not haunt the present system.

    All told, we would like it if this paper was understood as a challenge. Those who would prefer `traditional' relevant modal logics are being asked to give reasons for that preference. Absent such, we'd recommend that folks in need of relevant modal logics look to logics like $\logic{K}^{\positions{MPos}}$ rather than to logics like $\logic{B}^\Box$.



\end{document}